\DeclarePairedDelimiter\floor{\lfloor}{\rfloor}
\newcommand\R{\mathbb{R}}
\newcommand{\vertiii}[1]{{\vert\kern-0.25ex\vert\kern-0.25ex\vert #1
    \vert\kern-0.25ex\vert\kern-0.25ex\vert}}
\newtheorem{theorem}{Theorem}
\newtheorem{corollary}[theorem]{Corollary}
\newtheorem{proposition}[theorem]{Proposition}
\theoremstyle{remark}
\theoremstyle{remark}
\theoremstyle{remark}
\newtheorem{remark}[theorem]{Remark}
\begin{document}

\vspace{-20mm}
\begin{center}{\Large \bf 
Generalized Segal--Bargmann transform for Poisson distribution revisited}
\end{center}

{\large Chadaphorn Kodsueb}\\ 
School of Mathematical Sciences and Geoinformatics, Institute of Science, Suranaree University of Technology, Nakhon Ratchasima, Thailand 30000;\\
e-mail: \texttt{chadaphorn.ko@sut.ac.th}\vspace{2mm}

{\large Eugene Lytvynov}\\ Department of Mathematics, Swansea University, Bay Campus,  Swansea SA1 8EN, U.K.;\\
e-mail: \texttt{e.lytvynov@swansea.ac.uk} (Corresponding Author)\vspace{2mm}

{\small
\begin{center}
{\bf Abstract}

 \end{center}

\noindent
For  $\alpha>0$ and $\sigma > 0$, we consider the following probability distribution on $\alpha\mathbb N_0$: 
$\pi_{\alpha,\sigma}  = \exp \big(- \frac{\sigma}{{\alpha}^2}\big) \sum_{n=0}^{\infty} \frac{1}{n!} \big(\frac{\sigma}{{\alpha}^2}\big)^n {\delta}_{\alpha n}$, where $\delta_y$ denotes the Dirac measure  with mass at $y$. For $\alpha=1$, $\pi_{1,\sigma}$ is the Poisson distribution with parameter $\sigma$. Furthermore, the centered probability distribution $\tilde \pi_{\alpha,\sigma}  = \exp \big(- \frac{\sigma}{{\alpha}^2}\big) \sum_{n=0}^{\infty} \frac{1}{n!} \big(\frac{\sigma}{{\alpha}^2}\big)^n {\delta}_{\alpha n-\sigma/\alpha}$ weakly converges to $\mu_\sigma$ as $\alpha\to0$. Here $\mu_\sigma$ is the Gaussian distribution with mean zero and variance $\sigma$. Let $(c_n)_{n=0}^\infty$ be the monic polynomial sequence that is orthogonal with respect to the measure $\mu_{\alpha,\sigma}$. In particular, for $\alpha=1$, $(c_n)_{n=0}^\infty$ is a sequence of Charlier polynomials. Let $\mathbb F_\sigma(\mathbb C)$  denote the Bargmann space of all entire functions $f(z)=\sum_{n=0}^\infty f_nz^n$ with $f_n \in \mathbb C$ satisfying $ \sum_{n=0}^{\infty} {| f_n |}^2 \, n! \, \sigma^n < \infty$. The generalized Segal--Bargmann transform associated with the measure $\pi_{\alpha,\sigma}$ is a unitary operator  $\mathcal S:L^2(\alpha\mathbb N_0,\pi_{\alpha,\sigma})\to \mathbb F_\sigma(\mathbb C)$ that satisfies $(\mathcal Sc_n)(z)=z^n$ for $n\in\mathbb N_0$. We present some new results related to the operator $\mathcal S$. In particular, we observe how the study of $\mathcal S$ naturally leads  to the normal ordering in the Weyl algebra.
 
\noindent   
 \noindent

 } \vspace{2mm}

{\bf Keywords:}  Segal--Bargmann transform,  Poisson distribution, Charlier polynomials, Weyl algebra
\vspace{2mm}

{\bf 2020 MSC:} 	30H20, 46E20, 81R10, 81R30 

\section{Introduction}

The concept of the Segal--Bargmann transform was developed by I. E. Segal \cite{Segal1,Segal2,Segal3} and V. Bargmann \cite{Bargmann1,Bargmann2,Bargmann3}. In its simplest form, in the one-dimensional case, the Segal--Bargmann transform is the unitary operator $\mathbb S$ between the complex $L^2$-space of the standard Gaussian distribution $\mu$ on $\mathbb R$ and the Bargmann space $\mathbb F(\mathbb C)$ of entire functions on $\mathbb C$. 

Let us recall that $\mathbb F(\mathbb C)$ is a Hilbert space that consists of all entire functions $f(z)=\sum_{n=0}^\infty f_nz^n$ with $f_n\in\mathbb C$ satisfying $\sum_{n=0}^\infty |f_n|^2n!<\infty$, and the inner product of $ f(z) = \sum_{n=0}^{\infty} f_n z^n$ and $g(z) = \sum_{n=0}^{\infty} g_n z^n$ in $\mathbb F(\mathbb C)$ is given by 
$ (f,g)_{\mathbb F(\mathbb C)} =\sum_{n=0}^{\infty} f_n \, \overline{g_n} \, n!$. The Bargmann space $\mathbb F(\mathbb C)$ is a proper subspace of the complex $L^2$-space $L^2(\mathbb C,\nu)$, where $ \nu(dz) = \pi^{-1} \exp( - |z|^2) \, dA(z)$ and 
 $dA(z)=dx\,dy$ ($z=x+iy$) is the Lebesgue measure on $\mathbb C$. The Bargmann space $\mathbb F(\mathbb C)$ provides a realization of the Fock space over $\mathbb C$. 
 
 Let $(h_n)_{n=0}^\infty$ be the the sequence of monic Hermite polynomials that are orthogonal with respect to the measure $\mu$. The Segal--Bargmann transform $\mathbb S:L^2(\mathbb R,\mu)\to\mathbb F(\mathbb C)$ satisfies $(\mathbb Sh_n)(z)=z^n$ for $n\in\mathbb N_0$. The $\mathbb S$ is an integral operator with the integral kernel $\mathbb E(x,z)=\sum_{n=0}^\infty \frac{z^n}{n!}\,h_n(x)=\exp (- z^2/2 - x z)$. For each $z\in\mathbb C$, the function $\mathbb E(\cdot,z)$ is called a coherent state. It is an eigenfunction for the annihilation operator $a^-$ belonging to eigenvalue $z$. Here the operator $a^-$ satisfies $a^-h_n=nh_{n-1}$ ($n\in \mathbb N_0$). Under the Segal--Bargmann  transform $\mathbb S$, the operator $a^-$ goes over to the operator of differentiation in $\mathbb F(\mathbb C)$. The adjoint of the annihilation  operator $a^-$ is the creation operator $a^+$, satisfying $a^+h_n=h_{n+1}$ ($n\in \mathbb N_0$). Then $\mathbb Sa^+\mathbb S^{-1}$ is the operator of multiplication by the variable $z$ in $\mathbb F(\mathbb C)$.  For each function $f\in L^2(\mathbb R,\mu)$, the restriction of the entire function $\mathbb Sf$ to $\mathbb R$ can be written as $(\mathbb Sf)(z)=\int_{\mathbb R}f(x+z)\mu(dx)$ ($z\in\mathbb R$).   

The Segal--Bargmann transform has been similarly defined and studied in the multivariate case. In fact, in his original papers  \cite{Segal1,Segal2,Segal3}, Segal was already interested in the infinite dimensional case.  However, when defining a counterpart of the Bargmann space, Segal used a sequence of Gaussian measures on $\mathbb C^n$ with $n\in\mathbb N$ increasing to $\infty$. 

A study of the Segal--Bargmann transform associated with a Gaussian measure on a (complex) infinite dimensional space was carried out by Y.~M.~Berezansky and Y.~G.~Kondratiev in their monograph \cite[Chapter~2]{BK}, see also the earlier work \cite{Kondratiev}. Essentially at the same time, the Segal--Bargmann transform in the infinite dimensional setting was discussed within white noise analysis under the name of $S$-transform. See e.g.\ the monographs \cite{HKPS,Obata}  and the references therein.  In fact, in the infinite dimensional setting, it was natural to study the Segal--Bargmann transform of spaces of test and generalized functions.

It is well known that the Poisson distribution and the Poisson point process  possess  quite a few properties that are similar to properties of Gaussian measures. For example, both the Poisson process and a Gaussian measure provide a natural unitary isomorphism between their $L^2$-spaces and the Fock space, see e.g.\    \cite{Surgailis}.

A counterpart of the Segal--Bargmann transform for the Poisson distribution was developed by  N. Asai, I. Kubo, and H.-H. Kuo  \cite{Segal-Bargmann}. In fact, even earlier,  Y.-J. Lee and  H.-H. Shih  \cite{LS} developed a generalized Segal--Bargmann transform  for L\'evy processes, which of course include the Poisson  process. 

The present paper deals with a family of discrete probability distributions $\pi_{\alpha,\sigma}$ ($\alpha>0$, $\sigma>0$) defined by 
\begin{equation}\label{cdtesa4wq4}
\pi_{\alpha,\sigma}  = \exp \bigg(- \frac{\sigma}{{\alpha}^2}\bigg) \sum_{n=0}^{\infty} \frac{1}{n!} \bigg(\frac{\sigma}{{\alpha}^2}\bigg)^n {\delta}_{\alpha n} , 
\end{equation}
where $\delta_y$ denotes the Dirac measure  with mass at $y$.
In particular, for $\alpha=1$, $\pi_{1,\sigma}=\pi_\sigma$ is the Poisson distribution with parameter $\sigma$.  Note that the measure $\pi_{\alpha,\sigma}$ is concentrated on the set $\alpha\mathbb N_0$. Furthermore, the centered probability distributions 
\begin{equation}\label{fxdrsre}
\tilde \pi_{\alpha,\sigma}  = \exp \bigg(- \frac{\sigma}{{\alpha}^2}\bigg) \sum_{n=0}^{\infty} \frac{1}{n!} \bigg(\frac{\sigma}{{\alpha}^2}\bigg)^n {\delta}_{\alpha n-\sigma/\alpha}
\end{equation}
 weakly converge to $\mu_\sigma$ as $\alpha\to0$. Here $\mu_\sigma$ is the Gaussian distribution with mean zero and variance $\sigma$. From the viewpoint of quantum physics, the interpolating parameter~$\alpha$ provides a connection between the particle density of an infinite free Bose gas at zero temperature and a free Bose field, see Remark~\ref{ytst5sw5ewwu} below for detail.

 Let $(c_n)_{n=0}^\infty$ be the monic polynomial sequence that is orthogonal with respect to the measure $\mu_{\alpha,\sigma}$. In particular, for $\alpha=1$, $(c_n)_{n=0}^\infty$ is a sequence of Charlier polynomials. The (exponential) generating function of the sequence $(c_n)_{n=0}^\infty$  has the form  
 \begin{equation} \label{tsrear}
\sum_{n=0}^{\infty} \frac{t^n}{n!} \, c_n (z)=\exp \bigg( \frac{z}{\alpha}\,\log (1+t \alpha) - \frac{\sigma t}{\alpha}\bigg).
\end{equation}
In particular, $(c_n)_{n=0}^\infty$  is a Sheffer sequence.

 Let $\mathbb F_\sigma(\mathbb C)$  denote the Bargmann space of all entire functions $f(z)=\sum_{n=0}^\infty f_nz^n$ with $f_n \in \mathbb C$ satisfying $ \sum_{n=0}^{\infty} {| f_n |}^2 \, n! \, \sigma^n < \infty$   and the inner product of $ f(z) = \sum_{n=0}^{\infty} f_n z^n$ and $g(z) = \sum_{n=0}^{\infty} g_n z^n$ in $\mathbb F_\sigma(\mathbb C)$ is given by 
$ (f,g)_{\mathbb F_\sigma (\mathbb C)} =\sum_{n=0}^{\infty} f_n \, \overline{g_n} \, n! \, \sigma^n$. (Thus, the above defined Bargmann space $\mathbb F(\mathbb C)$ corresponds to the choice of the parameter $\sigma=1$ in $\mathbb F_\sigma(\mathbb C)$.)
We define a generalized Segal--Bargmann transform associated with the measure $\pi_{\alpha,\sigma}$ as a unitary operator  $\mathcal S:L^2(\alpha\mathbb N_0,\pi_{\alpha,\sigma})\to \mathbb F_\sigma(\mathbb C)$ that satisfies $(\mathcal Sc_n)(z)=z^n$ for $n\in\mathbb N_0$. In the case $\alpha=1$, this is the operator studied in \cite{Segal-Bargmann}. In this paper, we will discuss some new (together with some old) results related to the operator $\mathcal S$.

The paper is organized as follows. In Section~\ref{cxgtstst}, we will recall several key results of the umbral calculus, which is the  theory of Sheffer polynomial sequences and Sheffer operators, e.g.\ \cite{Rota-paper,11-Kung,Roman}. We also discuss S. Grabiner's result \cite{Grabiner} on an extension of a Sheffer operator (acting on polynomials) to a self-homeomorphism of the space  ${\mathcal E}_{\min}^1 (\mathbb C)$ of entire functions of order at most 1 and minimal type. 

For the reader's convenience, in Section~\ref{xdzewa4wa46}, we will recall some well-known facts about the classical Segal--Bargmann transform (for a Gaussian measure) in the one-dimensional case. We will recall, in particular, how the Segal--Bargmann transform can be interpreted as (an extension of) a Sheffer operator.    

In Section~\ref{gcfcfcdtrd}, we will define and study the  generalized Segal--Bargmann transform $\mathcal S$ for the probability measure $\pi_{\alpha,\sigma}$. For $\alpha=1$, part of the results in this section are due to Asai et al.\  \cite{Segal-Bargmann}. Nevertheless, some results of this section are new even in the Poisson case. For example, we prove that, for $z\in\mathbb R$, $z>-\sigma/\alpha$, $(\mathcal Sf)(z)$ can be written as $(\mathcal Sf)(z)=\int f\,d\pi_{\alpha,\sigma+\alpha z}$. We also show that the operator $\mathcal S$, restricted to polynomials, can written as $\mathcal S=E_{\sigma/\alpha}\mathcal T_\alpha$. Here $E_{\sigma/\alpha}$ is the operator  of shift by $\sigma/\alpha$, and $\mathcal T_\alpha$ is the Sheffer operator associated with a sequence of Touchard polynomials.

Our studies in Section~\ref{gcfcfcdtrd} will naturally lead us to a pair of operators, $\mathcal U$ and $\mathcal V$ that act on polynomials and satisfy the commutation relation  $[\mathcal V,\mathcal U]=\alpha$, hence they are generators of a Weyl algebra. The main connection with the operator $\mathcal S$ is that, under the action of $\mathcal S$, the operator of multiplication by the variable goes over to the operator $\mathcal U\mathcal V$.  As a consequence of the normal ordering in the Weyl algebra \cite{10-Katriel}, we derive explicit formulas for the polynomials $c_n$, as well as an explicit representation of a monomial though the polynomials $c_n$. 

Finally, in Section \ref{fxdzdzsese}, we use Grabiner's result \cite{Grabiner} to treat the operator $\mathcal S$ as a self-homeo\-morphisim of ${\mathcal E}_{\min}^1(\mathbb C)$.  We also study the operators $U=\mathcal S^{-1}\mathcal U\mathcal S$ and $V=\mathcal S^{-1}\mathcal V\mathcal S$, acting in ${\mathcal E}_{\min}^1(\mathbb C)$. We prove that these operators act as follows: $(Uf)(z)=zf(z-\alpha)$, $(Vf)(z)=f(z+\alpha)$.

We note that, in our recent paper \cite{KL}, we dealt  with the generalized Segal--Bargmann transform for the remaining Sheffer sequences of orthogonal polynomials in the classification of Meixner \cite{13-Meixner}.

An extension of the results of this paper to an infinite dimensional setting will be a topic of our future research. 

\section{Elements of umbral calculus}\label{cxgtstst}

\subsection{Sheffer sequences}

This subsection is based on \cite{Rota-paper} and Chapter IV, Sections 3 and 4 of \cite{11-Kung}.

 Let $\mathbb C[z]$ denote  vector space of polynomials over $\mathbb C$. We denote by $\mathcal L(\mathbb C[z])$ the vector space of linear operators acting in $\mathbb C[z]$. 

For $h\in\mathbb C$, we define $E_h\in  \mathcal L(\mathbb C[z])$  as the operator of shift by $h$: $(E_hp)(z)=p(z+h)$ for $p\in\mathbb C[z]$. Boole's formula states that $E_h=e^{hD}=\sum_{n=0}^\infty \frac{h^n}{n!}\,D^n$, where $D\in  \mathcal L(\mathbb C[z])$ is the operator of differentiation. An operator $Q \in \mathcal L(\mathbb C[z])$ is called shift-invariant if $QE_h = E_h Q$ for each $h\in\mathbb{C}$. An operator $Q \in \mathcal L(\mathbb C[z])$  is called a delta operator if $Q$ is  shift-invariant and $Qz = 1$.  

Let $(p_n)_{n=0}^\infty$ be a monic polynomial sequence, i.e., $p_n\in\mathbb C[z]$ ($n\in\mathbb N$), the degree of $p_n$ is $n$, and the coefficient by $z^n$ is 1. For $(p_n)_{n=0}^\infty$, its lowering operator $Q \in \mathcal L(\mathbb C[z])$ is defined by $Qp_n=np_{n-1}$ for $n\in\mathbb N_0$.

A monic polynomial sequence $(p_n)_{n=0}^\infty$ is said to be of binomial type if $p_n(z+\zeta)=\sum_{k=0}^n\binom nk p_k(z)p_{n-k}(\zeta)$ for all $n\in\mathbb N$ and $z,\zeta\in\mathbb C$. 

\begin{theorem} \label{thmBT}
	Let $(p_n)_{n=0}^\infty$ be a monic polynomial sequence, and let $Q$ be its lowering operator.
	The following statements are equivalent:
	
	 (BT1) The sequence $(p_n)_{n=0}^\infty$ is of binomial type.
	
	(BT2) The operator $Q$ is a delta operator.

	(BT3) The operator $Q$ is of the form $ Q= C(D) = \sum_{n=1}^{\infty} c_n D^n$, where $C(t) = \sum_{n=1}^{\infty} c_n t^n$ is a formal power series over $\mathbb C$ with $c_1 = 1$.

	(BT4) The polynomial sequence $(p_n)_{n=0}^\infty$ has the (exponential) generating function of the form 
	\begin{equation}\label{eq:binomial}
		\sum_{n=0}^{\infty} p_n (z)  \frac{t^n}{n!} = \exp(z B(t)),
		\end{equation}
where $B(t) = \sum_{n=1}^{\infty} b_n  t^n$ is a formal power series over $\mathbb C$ with $b_1 = 1$, and formula \eqref{eq:binomial} is understood as an equality of formal power series in~$t$.

Furthermore, the formal power series $B(t)$ and $C(t)$ above are inverse of each other, i.e., $B(C(t)) = C(B(t)) = t$.  
\end{theorem}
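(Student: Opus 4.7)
The plan is to establish the chain (BT3) $\Leftrightarrow$ (BT2), then (BT3) $\Leftrightarrow$ (BT4), then (BT4) $\Leftrightarrow$ (BT1), with the inverse relation $B(C(t)) = C(B(t)) = t$ dropping out naturally from the middle equivalence.

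The direction (BT3) $\Rightarrow$ (BT2) is immediate: $D$ commutes with every $E_h$, so any formal series in $D$ (which truncates when applied to each polynomial) is shift-invariant, and $C(D)z = c_1\cdot 1 = 1$. The substantive direction is (BT2) $\Rightarrow$ (BT3). First, I would show $Q1 = 0$: shift-invariance makes $Q1$ translation-invariant hence a constant $c$, and applying $Q$ to $E_h z = z + h$ together with $Qz = 1$ forces $1 + hc = 1$ for every $h$, so $c = 0$. Setting $c_n := \frac{1}{n!}(Qz^n)(0)$ then yields $c_0 = 0$ and $c_1 = 1$. The key step is showing that the shift-invariant operator $R := Q - \sum_{n=0}^\infty c_n D^n$ vanishes on $\mathbb C[z]$. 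By construction and the identity $(D^k z^n)(0) = n!\,\delta_{k,n}$, we have $(Rz^n)(0) = 0$ for every $n$. Shift-invariance of $R$ combined with the binomial expansion $E_{z_0}z^n = (z+z_0)^n$ then gives $(Rz^n)(z_0) = (RE_{z_0}z^n)(0) = \sum_{k=0}^n \binom nk z_0^{n-k}(Rz^k)(0) = 0$ for every $z_0$, so $R$ annihilates every monomial and hence vanishes.

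For (BT3) $\Leftrightarrow$ (BT4), I would work with the formal generating function $G(z,t) := \sum_{n=0}^\infty p_n(z) \frac{t^n}{n!}$ with polynomial coefficients in $z$. The lowering-operator relation $Qp_n = np_{n-1}$ is equivalent, after reindexing, to $QG = tG$. If one posits $G = \exp(zB(t))$, then $D^k G = B(t)^k G$, so for $Q = C(D)$ one gets $QG = C(B(t))\,G$. Hence the lowering relation is equivalent to $C(B(t)) = t$, i.e., $B$ is the compositional inverse of $C$. Since $c_1 = 1$, $C$ admits a unique formal compositional inverse $B$, which necessarily has $b_1 = 1$ and automatically satisfies $B(C(t)) = t$ as well. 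This simultaneously yields both directions of the equivalence and the inverse identities; monicity of $p_n$ follows from $B(t)^n = t^n + O(t^{n+1})$.

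Finally, (BT4) $\Leftrightarrow$ (BT1) reduces to the product rule $G(z+\zeta,t) = G(z,t)\,G(\zeta,t)$, which by comparing coefficients of $t^n/n!$ is precisely the binomial identity. The forward implication is immediate from the exponential form. For the converse, setting $z = \zeta = 0$ in the binomial identity gives $p_0^2 = p_0$, hence $p_0 = 1$; then $z = 0$ forces $p_n(0) = 0$ for $n \geq 1$. Thus $G(0,t) = 1$, so $\log G(z,t)$ is a well-defined formal power series in $t$, additive in $z$ by the product rule, hence of the form $zB(t)$ with $B(t) = \partial_z G(z,t)|_{z=0}$ and $b_1 = p_1'(0) = 1$ because $p_1(z) = z$. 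I expect the structural step (BT2) $\Rightarrow$ (BT3) to be the main obstacle; the remaining equivalences are essentially formal-power-series manipulations.
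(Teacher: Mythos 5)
The weak link is your middle step, in the direction (BT3) $\Rightarrow$ (BT4): there you \emph{posit} $G(z,t)=\exp(zB(t))$ and then check that the lowering relation forces $C(B(t))=t$, which is assuming the very conclusion you need. What the hypotheses actually give is only the identity $C(D)G=tG$, and the general monic solution of that identity is $G(z,t)=A(t)\exp\bigl(zB(t)\bigr)$ with $B=C^{-1}$ and $A$ an arbitrary formal power series with $A(0)=1$ --- i.e.\ exactly the Sheffer sequences for $Q$ (compare Theorem~\ref{thmSS} and \eqref{SS3}). A lowering operator does not determine its monic sequence, so the exponential form cannot be extracted from (BT3) alone. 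Concretely, the monic Hermite sequence is Appell, hence its lowering operator is $D=C(D)$ with $c_1=1$ and is a delta operator, yet its generating function is $\exp\bigl(zt-\tfrac12\sigma t^2\bigr)$ and the sequence is not of binomial type. This shows both that your argument for (BT3) $\Rightarrow$ (BT4) is circular and that the equivalence, read literally, needs the tacit basic-sequence normalization $p_0=1$, $p_n(0)=0$ for $n\geq1$, which is built into the statements in the sources the paper quotes (\cite{Rota-paper,11-Kung}; the paper itself gives no proof, so there is no in-paper argument to compare against). You do derive $p_n(0)=\delta_{n,0}$, but only later, inside (BT1) $\Rightarrow$ (BT4), and your chain never feeds that information back into the step where it is indispensable.

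The repair is to make the normalization explicit and then your own computation closes the gap. Assuming $p_n(0)=\delta_{n,0}$ in addition to (BT3), let $B$ be the compositional inverse of $C$ (it exists and has $b_1=1$ since $c_1=1$) and let $(\tilde p_n)_{n=0}^\infty$ be the monic sequence generated by $\exp(zB(t))$; your identity $D^k\exp(zB(t))=B(t)^k\exp(zB(t))$ shows $C(D)\tilde p_n=n\tilde p_{n-1}$, and clearly $\tilde p_n(0)=\delta_{n,0}$. Uniqueness then gives $p_n=\tilde p_n$ by induction: $Q(p_n-\tilde p_n)=n(p_{n-1}-\tilde p_{n-1})=0$, and since $Q=D+c_2D^2+\cdots$ lowers degree by exactly one, its kernel in $\mathbb C[z]$ consists of the constants, so $p_n-\tilde p_n$ is a constant vanishing at $0$. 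With this insertion (or with the normalization added to (BT2)/(BT3) in the statement, which is how the classical theorem is phrased), the rest of your proposal is correct: the expansion argument for (BT2) $\Leftrightarrow$ (BT3) is the standard first expansion theorem and is carried out properly, the product-rule/logarithm argument for (BT1) $\Leftrightarrow$ (BT4) (including the induction giving $p_n(0)=0$ and $b_1=1$) is sound, and the ``furthermore'' identities $B(C(t))=C(B(t))=t$ indeed drop out once (BT3) and (BT4) are both in hand.
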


By Theorem~\ref{thmBT}, for each delta operator $Q$, there exits a unique binomial sequence $(p_n)_{n=0}^\infty$ for which $Q$ is its lowering operator. Then $(p_n)_{n=0}^\infty$ is called a   basic sequence for $Q$.

The falling factorials are defined by $(z)_0=1$ and
	$$(z) _n = z(z-1) \dotsm (z-n+1),\quad n\in\mathbb N.$$
This is a polynomial sequence of binomial type with generating function of the form~\eqref{eq:binomial}
 in which  $B(t)= \log(1+t)$.  

Note that Stirling numbers of the first kind, $s(n,k)$, are the coefficient of the expansion  
\begin{equation}\label{csresr5as5a}(z) _n = \sum_{k=1}^{n} s(n,k) z^k,
\end{equation} while  Stirling numbers of the second kind, $S(n,k)$, are the coefficient of the expansion  
\begin{equation}\label{fxresera}z^n = \sum_{k=1}^{n} S(n,k) (z)_k.
\end{equation}

One defines the monic polynomial sequence of Touchard (or  exponential) polynomials by 
\begin{equation}\label{cfgxt5w}
T_n (z) = \sum_{k=1}^{n} S(n,k) \, z^k.\end{equation} 
This is a polynomial sequence of binomial type with generating function   of the form
\begin{equation}\label{TouchardGT}
\sum_{n=0}^{\infty} T_n (z)  \frac{t^n}{n!} = \exp\big(z (e^t - 1)\big).
\end{equation}

Let $Q$ be a delta operator.  A monic polynomial sequence $(s_n )_{n=0}^\infty$ is called a Sheffer sequence for $Q$ if $Q$ is the lowering operator for $(s_n)_{n=0}^\infty$.

\begin{theorem} \label{thmSS}
	Let $Q=C(D)$ be a delta operator with basic sequence $(p_n )_{n=0}^\infty$ that has generating function \eqref{eq:binomial}.  
	 Let $(s_n)_{n=0}^\infty$ be a monic polynomial sequence. Then the following statements are equivalent:
	\par (SS1) The polynomial sequence $(s_n )_{n=0}^\infty$ is a Sheffer sequence for $Q$.
	\par (SS2) There exists a (unique, invertible) shift-invariant operator $T$  that satisfies
	 $ s_n (z) = (Tp_n) (z)$ for all $n\in\mathbb N_0$.
	 	\par (SS3) The polynomial sequence $(s_n)_{n=0}^\infty$ has the (exponential) generating function of the form
	\begin{equation}\label{SS3} 
	\sum_{n=0}^{\infty} s_n (z) \, \frac{t^n}{n!} = \exp(z  B(t))  A(t),   \end{equation} 
	where $B(t)$ is as in \eqref{eq:binomial} and 
	$A(t) = \sum_{n=0}^{\infty} a_n t^n$ is a formal power series over $\mathbb C$ with $a_0 = 1$.
	
	Furthermore, $T=\tau(D)=\sum_{k=0}^\infty\tau_k D^k$, where the formal power series $\tau(t)=\sum_{k=0}^\infty \tau_kt^k$ satisfies $\tau(t)=A(C(t))$, where $C(t)$ is the compositional inverse of $B(t)$.

\end{theorem}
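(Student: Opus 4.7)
The plan is to prove the cycle (SS2) $\Rightarrow$ (SS1) $\Rightarrow$ (SS3) $\Rightarrow$ (SS2), and to read off the explicit form $T = \tau(D)$ with $\tau = A \circ C$ in the last implication. The implication (SS2) $\Rightarrow$ (SS1) should be immediate: if $T$ is shift-invariant then it lies in the commutative algebra of shift-invariant operators, which also contains $D$ and hence every formal power series in $D$, in particular $Q = C(D)$. Therefore $Qs_n = QTp_n = TQp_n = T(np_{n-1}) = ns_{n-1}$, so $Q$ is the lowering operator of $(s_n)_{n=0}^\infty$.

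For (SS1) $\Rightarrow$ (SS3), I would expand $s_n = \sum_{k=0}^{n} a_{n,k}\, p_k$ using that $(p_k)_{k=0}^\infty$ is a basis of $\mathbb C[z]$, and then apply $Q$ to both sides. Using $Qs_n = ns_{n-1}$, $Qp_k = kp_{k-1}$, and comparing coefficients in the basis, one obtains the recurrence $(k+1)\, a_{n,k+1} = n\, a_{n-1,k}$. Together with the monic normalization $a_{n,n} = 1$, this forces the product form $a_{n,k} = \binom{n}{k}\, c_{n-k}$ for a sequence $(c_j)_{j=0}^\infty$ with $c_0 = 1$. Substituting back and multiplying generating series then yields
\[
\sum_{n=0}^{\infty} s_n(z)\, \frac{t^n}{n!} = \bigg(\sum_{k=0}^{\infty} p_k(z)\, \frac{t^k}{k!}\bigg)\bigg(\sum_{j=0}^{\infty} c_j\, \frac{t^j}{j!}\bigg) = \exp(zB(t))\, A(t),
\]
where $A(t) := \sum_{j=0}^{\infty} (c_j/j!)\, t^j$ satisfies $A(0) = 1$, which is (SS3).

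For (SS3) $\Rightarrow$ (SS2), I would define $T := \tau(D)$ with $\tau(t) := A(C(t))$; this is shift-invariant as a formal power series in $D$, and is invertible in the algebra of shift-invariant operators because $\tau(0) = A(0) = 1$ makes $\tau$ invertible in $\mathbb C[[t]]$. The crucial identity is the operator action $\tau(D)\exp(zB(t)) = \tau(B(t))\exp(zB(t))$ (differentiating in $z$, each power of $D$ brings down a power of $B(t)$). Combined with $\tau(B(t)) = A(C(B(t))) = A(t)$, this gives
\[
\sum_{n=0}^{\infty} (Tp_n)(z)\, \frac{t^n}{n!} = T\exp(zB(t)) = A(t)\exp(zB(t)) = \sum_{n=0}^{\infty} s_n(z)\, \frac{t^n}{n!},
\]
whence $Tp_n = s_n$. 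Uniqueness of $T$ is automatic because $(p_n)_{n=0}^\infty$ is a basis of $\mathbb C[z]$.

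I expect the main obstacle to be the step (SS1) $\Rightarrow$ (SS3), where one must correctly extract the binomial-convolution structure $a_{n,k} = \binom{n}{k} c_{n-k}$ from the descending recurrence; once this factorization is in hand, the product form of the exponential generating function, and thereafter the identification $\tau = A \circ C$, follow by routine formal-series manipulations.
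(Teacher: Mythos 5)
The paper itself states Theorem~\ref{thmSS} without proof (it is recalled from the umbral-calculus references \cite{Rota-paper,11-Kung}), so there is no internal argument to compare against; your cyclic proof $(SS2)\Rightarrow(SS1)\Rightarrow(SS3)\Rightarrow(SS2)$ is correct and is essentially the classical proof found in those sources: the recurrence $(k+1)a_{n,k+1}=n\,a_{n-1,k}$ together with $a_{n,n}=1$ does force $a_{n,k}=\binom{n}{k}c_{n-k}$, and the identity $\tau(D)e^{zB(t)}=\tau(B(t))e^{zB(t)}$ is legitimate coefficientwise in $t$ because $D$ acts locally nilpotently on polynomials. The only step where you lean on an unproved standard fact is the commutativity of all shift-invariant operators in $(SS2)\Rightarrow(SS1)$; this can be avoided more elementarily by noting that $TE_h=E_hT$ for all $h$, expanded through Boole's formula $E_h=e^{hD}$ and compared at first order in $h$ on each polynomial, already gives $TD=DT$ and hence $TQ=QT$ for $Q=C(D)$.
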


A monic polynomial sequence $(s_n )_{n=0}^\infty$ is called an Appell sequence if it is a Sheffer sequence for the differential operator $D$. By Theorem~\ref{thmSS}, a monic polynomial sequence $(s_n )_{n=0}^\infty$ is an Appell sequence if and only if its generating function is of the form \eqref{SS3} with $B(t)=t$. 

\subsection{Sheffer operators}

This section is based on \cite[Chapter~3]{Roman}, \cite{Grabiner} and \cite{FinkelshteinEtAll}.

Let $(s_n)_{n=0}^\infty$ be a (monic) Sheffer sequence. One defines the Sheffer operator associated with the sequence $(s_n)_{n=0}^\infty$ as the operator $S\in\mathcal L(\mathbb C[z])$ that satisfies $Sz^n=s_n(z)$ for all $n\in\mathbb N_0$.  In the case where  $(s_n)_{n=0}^\infty$ is an Appell  sequence, the associated operator $S$ is called an Appell operator. In the case where  $(s_n)_{n=0}^\infty=(p_n)_{n=0}^\infty$ is a polynomial sequence of binomial type, the associated operator $P\in\mathcal L(\mathbb C[z])$ is called an umbral operator. We denote by $\mathfrak S$, $\mathfrak A$, and $\mathfrak B$ the sets of all Sheffer operators, Appell operators, and umbral operators, respectively. 

\begin{theorem}\label{gdtrsw56we}
The $\mathfrak S$ is a group for the product (composition) of linear operators in $\mathbb C[z]$. The $\mathfrak A$ is an abelian normal subgroup of $\mathfrak S$, $\mathfrak B$ is a subgroup of $\mathfrak S$, and $\mathfrak S$ is the semidirect product  of $\mathfrak A$ and $\mathfrak B$, i.e.,  $\mathfrak S=\mathfrak A\rtimes\mathfrak B$. 
\end{theorem}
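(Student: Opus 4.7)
The plan is to proceed in three stages: first, establish a factorization $S=TP$ with $T\in\mathfrak A$ and $P\in\mathfrak B$ for every $S\in\mathfrak S$; second, verify directly that $\mathfrak A$ and $\mathfrak B$ are each groups under composition; third, show that conjugation by umbral operators preserves $\mathfrak A$, which yields simultaneously that $\mathfrak S$ is closed under products, that $\mathfrak A$ is normal, and that $\mathfrak S$ is a semidirect product.

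For the factorization, I would invoke (SS2) of Theorem~\ref{thmSS}: given the Sheffer sequence $(s_n)_{n=0}^\infty$ associated with $S$, with basic sequence $(p_n)_{n=0}^\infty$ and underlying delta operator $Q=C(D)$, there exists a unique shift-invariant $T$ with $s_n=Tp_n$ for all $n$. Since $s_0=1=p_0$ one has $T1=1$, and since every shift-invariant operator is a formal power series in $D$ (this is the standard first expansion theorem, which is implicit in (BT3)), it follows that $T=\tau(D)$ with $\tau(0)=1$, so $T\in\mathfrak A$. Letting $P\in\mathfrak B$ denote the umbral operator $Pz^n=p_n(z)$, we obtain $S=TP$. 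Uniqueness follows from $\mathfrak A\cap\mathfrak B=\{I\}$: an operator that is simultaneously Appell and umbral corresponds to $A(t)\equiv 1$ and $B(t)=t$, hence fixes every monomial.

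The group laws become transparent after passing to generating functions. For Appell operators $T_i$ with associated series $A_i(t)$, applying $T_1$ to $\sum_n a^{(2)}_n(z)\,t^n/n!=e^{zt}A_2(t)$ yields $e^{zt}A_1(t)A_2(t)$, so composition corresponds to multiplication of power series with constant term $1$; in particular $\mathfrak A$ is abelian, with inverses given by $1/A(t)$. For umbral operators $P_i$ with associated series $B_i(t)$, the analogous computation gives $\sum_n (P_1P_2)z^n\cdot t^n/n!=\exp\bigl(zB_1(B_2(t))\bigr)$, so $\mathfrak B$ is isomorphic to the group of formal power series with $b_1=1$ under composition, which is a group because every such series has a compositional inverse.

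The crucial intertwining is $PDP^{-1}=Q=C(D)$: from $Qp_n=np_{n-1}$ one reads off $QPz^n=PDz^n$ for every $n$, hence $QP=PD$. Consequently, for any Appell $T=A(D)$ one has $PTP^{-1}=A(C(D))$, which has constant term $A(C(0))=A(0)=1$ and thus lies in $\mathfrak A$. Since $\mathfrak A$ is abelian, conjugation by any $S=T'P'\in\mathfrak S$ still preserves $\mathfrak A$; consequently $(T_1P_1)(T_2P_2)=T_1(P_1T_2P_1^{-1})P_1P_2\in\mathfrak A\mathfrak B$ and $(TP)^{-1}=(P^{-1}T^{-1}P)P^{-1}\in\mathfrak A\mathfrak B$, so $\mathfrak S$ is closed under composition and inversion. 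Combined with normality of $\mathfrak A$ and $\mathfrak A\cap\mathfrak B=\{I\}$, this yields $\mathfrak S=\mathfrak A\rtimes\mathfrak B$. The main obstacle I anticipate is the normality step, which requires justifying the functional-calculus identity $PA(D)P^{-1}=A(C(D))$ for a formal power series $A$; this is legitimate because on each polynomial only finitely many terms of the series contribute, so the identity $P D^k P^{-1}=C(D)^k$ extends termwise.
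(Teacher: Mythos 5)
The paper states Theorem~\ref{gdtrsw56we} without proof, importing it from \cite{Roman}, \cite{Grabiner} and \cite{FinkelshteinEtAll}, so there is no internal argument to compare against; judged on its own, your proof is correct and is essentially the standard argument from that literature. The three ingredients you use --- the factorization $S=TP$ with $T$ shift-invariant and $P$ umbral coming from (SS2), the identification of composition in $\mathfrak A$ (resp.\ $\mathfrak B$) with multiplication (resp.\ composition) of the series $A(t)$ (resp.\ $B(t)$), and the conjugation identity $PA(D)P^{-1}=A(C(D))$ deduced from $QP=PD$ --- are exactly what is needed, and your justification of the functional-calculus step by local finiteness on polynomials is the right one. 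Two points you leave implicit and should state: (i) that $T=\tau(D)$ with $\tau(0)=1$ really is an \emph{Appell operator} in the paper's sense, i.e.\ that $(\tau(D)z^n)_{n\ge 0}$ is a monic polynomial sequence with lowering operator $D$ (immediate, since $\tau(D)$ preserves degree and leading coefficient and commutes with $D$); note also that you do not need to invoke the First Expansion Theorem from outside the paper, because the final clause of Theorem~\ref{thmSS} already gives $T=\tau(D)$ with $\tau=A\circ C$, hence $\tau(0)=A(0)=1$; and (ii) the reverse inclusion $\mathfrak A\mathfrak B\subseteq\mathfrak S$, which you tacitly use when concluding that $\mathfrak S$ is closed under composition and inversion from the fact that the products land in $\mathfrak A\mathfrak B$ --- this is precisely the implication (SS2)$\Rightarrow$(SS1) of Theorem~\ref{thmSS} applied to $s_n=Tp_n$. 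With these two remarks made explicit, your argument is complete and matches the proof in the cited sources.
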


Theorem~\ref{gdtrsw56we} implies that, if $S$ is a Sheffer operator and $(r_n)_{n=0}^\infty$ is a Sheffer sequence, then the monic polynomial sequence $(Sr_n)_{n=0}^\infty$ is also a Sheffer sequence. A similar statement holds for Appell operators and Appell sequences, as well as umbral operators and sequences of binomial type. 

We will now consider an extension of a class of Sheffer operators to a space of entire functions.

Let $f : \mathbb C \rightarrow \mathbb C$ be an entire function.  One says that $f$ is of order at most 1 and minimal type (when the order is equal to 1) if $f$ satisfies the estimate 
\begin{equation}\label{tstrsara5raq35} \|f\|_n=\sup_{z \in \mathbb C} |f(z)| \exp(-|z|/n) < \infty \quad \text{for all } n\in\mathbb N. 
\end{equation} 
One denotes by ${\mathcal E}_{\min}^1 (\mathbb C)$ the vector space of all such functions.
For each $n\in\mathbb N$, $\lVert \cdot \rVert_{n} $
is a norm  on ${\mathcal E}_{\min}^1 (\mathbb C)$.  Let $B_{n}(\mathbb C)$ denote the Banach space obtained as the completion of  ${\mathcal E}_{\min}^1 (\mathbb C)$ in this norm, i.e., $B_n(\mathbb C)$ is the space of all entire functions $f$ whose norm $\|f\|_n$ is finite. 
 For any $n_1 > n_2$, we have  
$ B_{n_1}(\mathbb C) \subset B_{ n_2}(\mathbb C)$, and the embedding of $B_{n_1}(\mathbb C)$ into $B_{n_2}(\mathbb C)$ is continuous. Note that, as a set, $ {\mathcal E}_{\min}^1 (\mathbb C) = \bigcap_{n=1}^\infty B_{n}(\mathbb C)$. Hence, one defines the projective limit topology on ${\mathcal E}_{\min}^1(\mathbb C)$ given by the norms $\lVert \cdot \rVert_{n}$ ($n\in\mathbb N$). In particular, ${\mathcal E}_{\min}^1 (\mathbb C)$ is a Fr\'echet space.

Let us now discuss an equivalent description of the Fr\'echet space ${\mathcal E}_{\min}^1 (\mathbb C)$. For each $l\in\mathbb N$, denote by $E_l(\mathbb C)$ the Hilbert space of all entire functions $f(z) = \sum_{n=0}^\infty f_n z^n$ that satisfy  
$$ \mathcal N_l(f)=\bigg(\sum_{n=0}^\infty |f_n|^2  (n!)^22^{nl}\bigg)^{1/2} < \infty,$$
 and the norm of $f\in E_l(\mathbb C)$ is $\mathcal N_l(f)$.  Then, as a set, $ {\mathcal E}_{\min}^1 (\mathbb C) = \bigcap_{l=1}^\infty E_{l}(\mathbb C)$, and the topology on $ {\mathcal E}_{\min}^1 (\mathbb C) $ coincides with the projective limit of the $E_l(\mathbb C)$ spaces.    

\begin{theorem}[Grabiner]   \label{relatedSheffer} 
Let $(s_n)_{n=1}^\infty$ be a Sheffer  sequence with generating function~\eqref{SS3} such that $A(t)$ and $B(t)$ are holomorphic functions in a neighborhood of zero. Then the  Sheffer operator $ S \in\mathcal L(\mathbb C[z])$ defined by 
$S z^n = s_n (z)$ ($n\in\mathbb N_0$) extends by continuity to a linear self-homeomorphism of the space $ {\mathcal E}_{\min}^1 (\mathbb C)$. In particular, each function $f \in {\mathcal E}_{\min}^1 (\mathbb C)$ admits a unique representation \begin{equation} \label{uniqueRep}
f(z) = \sum_{n=0}^\infty f_n \, s_n (z) ,
\end{equation}
where the series on the right-hand side of formula \eqref{uniqueRep} converges in ${\mathcal E}_{\min}^1(\mathbb C)$.

\end{theorem}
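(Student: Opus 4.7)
The plan is to extract pointwise bounds on $|s_n(z)|$ from the generating function \eqref{SS3} via Cauchy's integral formula, exploit the Hilbert-space description $\mathcal{E}_{\min}^1(\mathbb{C})=\bigcap_{l=1}^\infty E_l(\mathbb{C})$ to control the series $\sum f_n s_n(z)$, and then invoke the group structure of $\mathfrak{S}$ (Theorem~\ref{gdtrsw56we}) to repeat the argument for $S^{-1}$.

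Concretely, since $A$ and $B$ are holomorphic in a neighborhood of the origin, I fix $r>0$ small enough that both are holomorphic and bounded on $\{|t|\le r\}$, and apply Cauchy's formula to~\eqref{SS3} to obtain
\begin{equation*}
s_n(z)=\frac{n!}{2\pi i}\oint_{|t|=r}\frac{\exp(zB(t))A(t)}{t^{n+1}}\,dt,
\end{equation*}
whence the pointwise bound $|s_n(z)|\le M\,n!\,r^{-n}\exp(K_r|z|)$ with $M=\max_{|t|=r}|A(t)|$ and $K_r=\max_{|t|=r}|B(t)|$. For $f(z)=\sum_{n=0}^\infty f_n z^n\in\mathcal{E}_{\min}^1(\mathbb{C})$, Cauchy--Schwarz gives
\begin{equation*}
\sum_{n=0}^\infty |f_n|\,n!\,r^{-n}\le\mathcal{N}_l(f)\,\Big(\sum_{n=0}^\infty 2^{-nl}r^{-2n}\Big)^{1/2}<\infty
\end{equation*}
whenever $2^l r^2>1$. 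Given an arbitrary $m\in\mathbb{N}$, since $B(0)=0$ I may shrink $r$ to arrange $K_r\le 1/m$, and then choose $l$ so large that the above sum converges. Combining the two estimates yields $\|Sf\|_m\le C_{m,l}\,\mathcal{N}_l(f)$, which provides the desired continuous extension $S:\mathcal{E}_{\min}^1(\mathbb{C})\to\mathcal{E}_{\min}^1(\mathbb{C})$ in the projective-limit topology.

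By Theorem~\ref{gdtrsw56we}, $S^{-1}\in\mathfrak{S}$, so $S^{-1}$ is itself a Sheffer operator. Setting $\tilde s_n:=S^{-1}z^n$ and using the semidirect-product decomposition $S=TP$ with $T=\tau(D)$ Appell and $P$ umbral, a short computation (the intertwining $P^{-1}\tau(D)P=\tau(B(D))$ together with $\tau\circ B=A$) identifies the generating function
\begin{equation*}
\sum_{n=0}^\infty\tilde s_n(z)\,\frac{t^n}{n!}=\frac{\exp(zC(t))}{A(C(t))},
\end{equation*}
where $C$ is the compositional inverse of $B$. Since $b_1=1$, the holomorphic inverse function theorem shows $C$ is holomorphic near $0$, and since $A(0)=1$ the reciprocal $1/A(C(t))$ is holomorphic near $0$ as well. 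Hence the estimate of the preceding paragraph applies verbatim to $S^{-1}$, and $S$ is a self-homeomorphism of $\mathcal{E}_{\min}^1(\mathbb{C})$. The unique representation~\eqref{uniqueRep} then follows by setting $g:=S^{-1}f=\sum f_n z^n$ and applying the continuous $S$ term by term; uniqueness is immediate because $\sum h_n s_n=0$ becomes $\sum h_n z^n=0$ under the continuous $S^{-1}$.

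I expect the main obstacle to be the inverse-side calculation, namely verifying that the generating function of $(\tilde s_n)$ is of the form $\exp(z\tilde B(t))\tilde A(t)$ with both $\tilde A$ and $\tilde B$ holomorphic near the origin. Once this is in hand, the Cauchy/Cauchy--Schwarz estimates on the two sides are completely parallel, and only the ordering of the choices of $r$, $l$ and $m$ requires care so that the bounds actually yield continuity in the projective-limit topology.
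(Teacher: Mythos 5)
The paper itself gives no proof of this statement: it is Grabiner's theorem, imported from \cite{Grabiner}, so there is no internal argument to compare yours with. Your blind proof is correct and self-contained, and it follows the natural (essentially Grabiner-style) route. The forward estimate works: since $A$ and $B$ are holomorphic near $0$ with $B(0)=0$, for fixed $z$ the formal identity \eqref{SS3} is the genuine Taylor expansion of the holomorphic function $t\mapsto\exp(zB(t))A(t)$ (worth one explicit sentence), so the Cauchy integral gives $|s_n(z)|\le M\,n!\,r^{-n}e^{K_r|z|}$, and the Cauchy--Schwarz step against $\mathcal N_l$ with $2^l r^2>1$, after shrinking $r$ so that $K_r\le 1/m$, indeed yields $\|Sf\|_m\le C_{m,l}\,\mathcal N_l(f)$; applying the same bound to tails gives convergence of $\sum_n f_n s_n$ in every norm $\|\cdot\|_m$, hence in ${\mathcal E}_{\min}^1(\mathbb C)$. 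The inverse-side computation is also right: writing $S=TP$ with $T=\tau(D)$, $\tau=A\circ C$, the intertwining $P^{-1}\theta(D)P=\theta(B(D))$ gives $S^{-1}=(1/A)(D)\,P^{-1}$, and since $P^{-1}$ is the umbral operator of the basic sequence with generating function $\exp(zC(t))$, the sequence $\tilde s_n=S^{-1}z^n$ has generating function $\exp(zC(t))/A(C(t))$; holomorphy of $C$ near $0$ (inverse function theorem, $B'(0)=1$) and of $1/A(C(t))$ (since $A(0)=1$) lets the same estimate apply to $S^{-1}$. Two small points deserve explicit mention in a final write-up: density of polynomials in ${\mathcal E}_{\min}^1(\mathbb C)$ (partial Taylor sums converge in every $\mathcal N_l$), which is what legitimises ``extension by continuity'' and shows that the two continuous extensions compose to the identity, so the extension of $S$ is a self-homeomorphism; and, for uniqueness in \eqref{uniqueRep}, that convergence in ${\mathcal E}_{\min}^1(\mathbb C)$ implies locally uniform convergence, so after applying the continuous $S^{-1}$ the vanishing of all coefficients follows from vanishing of a convergent power series. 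Both gaps are immediate to fill.
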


\begin{corollary} \label{CorrfllwThm}
Let $ (s_n)_{n=0}^\infty$ be a Sheffer sequence satisfying the condition of Theorem~\ref{relatedSheffer}. For each $l\in\mathbb N$, denote by $H_l(\mathbb C)$ the Hilbert space of all entire functions $f(z) = \sum_{n=0}^\infty f_n s_n(z)$ that satisfy  
$${\vertiii{f}}_{l}=\bigg(\sum_{n=0}^\infty |f_n|  (n!)^2 2^{nl}\bigg)^{1/2} < \infty ,$$
 and the norm of $f\in H_l(\mathbb C)$ is ${\vertiii{f}}_{l}$.  Then, as a set, $ {\mathcal E}_{\min}^1 (\mathbb C) = \bigcap_{l=1}^\infty H_{l}(\mathbb C)$, and the topology on $ {\mathcal E}_{\min}^1 (\mathbb C) $ coincides with the projective limit of the $H_l(\mathbb C)$ spaces.    
\end{corollary}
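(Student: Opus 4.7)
The plan is to recognize that the Sheffer operator $S$ of Theorem~\ref{relatedSheffer} identifies the Hilbert scale $(E_l(\mathbb C))_{l\in\mathbb N}$ with the Hilbert scale $(H_l(\mathbb C))_{l\in\mathbb N}$ in an isometric and compatible way, and then to transport the already-known equivalent description $\mathcal{E}^1_{\min}(\mathbb C)=\bigcap_{l\in\mathbb N} E_l(\mathbb C)$ through $S$.

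The first step I would carry out is to check that, for every $f\in\mathcal{E}^1_{\min}(\mathbb C)$, the Sheffer-expansion coefficients $f_n$ supplied by Theorem~\ref{relatedSheffer} are exactly the Taylor coefficients of $g:=S^{-1}f$. Writing $g(z)=\sum_n g_nz^n$, one first notes that the Taylor series of any element of $\mathcal{E}^1_{\min}(\mathbb C)$ converges in the Fr\'echet topology; this is evident from the $E_l(\mathbb C)$-description recalled just before the corollary. Continuity of $S$ then gives $f=Sg=\sum_n g_ns_n$ in $\mathcal{E}^1_{\min}(\mathbb C)$, and the uniqueness clause of Theorem~\ref{relatedSheffer} forces $f_n=g_n$ for all $n$. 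Consequently $\vertiii{f}_l=\mathcal N_l(S^{-1}f)$ (both finite or both infinite), so $S$ restricts to a surjective linear isometry $E_l(\mathbb C)\to H_l(\mathbb C)$ for each $l\in\mathbb N$.

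Given this identification, the set equality is immediate: since $S$ is a bijection of $\mathcal{E}^1_{\min}(\mathbb C)$ onto itself,
\[
\mathcal{E}^1_{\min}(\mathbb C)\;=\;S\!\left(\bigcap_{l=1}^\infty E_l(\mathbb C)\right)\;=\;\bigcap_{l=1}^\infty S(E_l(\mathbb C))\;=\;\bigcap_{l=1}^\infty H_l(\mathbb C).
\]
For the topology, the family $\{\vertiii{\cdot}_l\}_{l\in\mathbb N}$ coincides with $\{\mathcal N_l\circ S^{-1}\}_{l\in\mathbb N}$, so the projective limit topology it defines is the pullback, under the homeomorphism $S^{-1}$, of the Fr\'echet topology of $\mathcal{E}^1_{\min}(\mathbb C)$ already known from the $E_l(\mathbb C)$-description, and therefore equals it.

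The one delicate point is the coefficient identification in the second paragraph: it rests on the fact that Taylor series actually converge in the Fr\'echet topology of $\mathcal{E}^1_{\min}(\mathbb C)$, not merely pointwise or uniformly on compacts, and this is precisely where the $E_l(\mathbb C)$-description from the excerpt is essential. Everything else is a purely formal transport of structure through the self-homeomorphism $S$.
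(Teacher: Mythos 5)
Your argument is correct and is essentially the intended one: the paper states this corollary without a separate proof, as the direct transport of the $E_l(\mathbb C)$-scale description of ${\mathcal E}_{\min}^1 (\mathbb C)$ through the self-homeomorphism $S$ of Theorem~\ref{relatedSheffer}, with the coefficient identification $f_n=g_n$ obtained, exactly as in your second paragraph, from the Fr\'echet-convergence of Taylor series in the $E_l(\mathbb C)$-picture plus the uniqueness of the expansion \eqref{uniqueRep}. Two cosmetic points only: since $E_l(\mathbb C)$ is strictly larger than ${\mathcal E}_{\min}^1 (\mathbb C)$, the isometry $E_l(\mathbb C)\to H_l(\mathbb C)$ is an \emph{extension} of $S$ (via density of the polynomials in $E_l(\mathbb C)$) rather than a restriction, and the exponent in the displayed norm of the corollary should of course read $|f_n|^2$, which is how you have implicitly (and correctly) read it.
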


\section{Segal--Bargmann transform in the one-dimensional case}\label{xdzewa4wa46}

In this section, we follow \cite{Bargmann1} and \cite[Chapter~2, Section~5.2]{BK} (in a slightly generalized form).

Let $\sigma > 0$. Let $\nu_\sigma$ be the Gaussian measure on $\mathbb C$ given by 
$$ \nu_\sigma (dz) = \frac{1}{\pi\sigma} \, \exp\left( - \frac{|z|^2}{\sigma} \right) \, dA(z).$$
(Recall that  $dA(z)=dx\,dy$ for $z=x+iy$.) It holds that
 $$(z^m , z^n)_{L^2 (\mathbb C , \nu_\sigma)}=\int_{\mathbb C} z^m \, \overline{z^n} \, \nu_\sigma (dz) = \delta_{m,n} \, n! \, \sigma^n,\quad m, n \in {\mathbb N}_0,$$
where $\delta_{m,n}$ is the Kronecker delta. 
 
 Recall the Bargmann spce $\mathbb F_\sigma (\mathbb C)$ that was defined in the Introduction.
 Note that, for $f,g\in\mathbb F_\sigma(\mathbb C)$, we have
$  (f,g)_{\mathbb F_\sigma (\mathbb C)} =(f,g)_{L^2(\mathbb C,\nu_\sigma)}$.
Thus,  $\mathbb F_\sigma (\mathbb C)$  is a proper subspace of the Hilbert space $L^2(\mathbb C,\nu_\sigma)$. The  $\mathbb F_\sigma (\mathbb C)$ is called a Bargmann space. 

Let $m\in\mathbb R$, and let $\mu_{m,\sigma}$ denote  the Gaussian measure on $\mathbb R$ with mean $m$ and variance $\sigma$. 
For $m=0$, we denote $\mu_\sigma=\mu_{0,\sigma}$. 
Let $ (h_n)_{n=0}^\infty $ be the sequence of monic Hermite polynomials that are orthogonal with respect to the measure $\mu_\sigma$\,. The Hermite polynomials $(h_n)_{n=0}^\infty$ satisfy the recurrence relation 
\begin{equation}\label{gfxsresa4ea4}
	z h_n (z) = h_{n+1} (z) + \sigma n h_{n-1} (z), \quad n\in\mathbb N_0.			
\end{equation}
The (exponential) generating function of $ (h_n)_{n=0}^\infty $ has the form  
\begin{equation} \label{HermiteGT}
\sum_{n=0}^{\infty} \frac{t^n}{n!} \, h_n (z) = \exp \bigg( z t - \frac{1}{2} \sigma t^2 \bigg) .
\end{equation} 
In particular, $(h_n)_{n=0}^\infty$ is an Appell sequence. It follows from \eqref{gfxsresa4ea4} that $\|h_n\|_{L^2(\mathbb R,\mu_\sigma)}^2=n!\,\sigma^n$.

For each $z\in\mathbb C$, the corresponding coherent state is defined by
 \begin{equation} \label{Eo}
\mathbb E(x, z)= \sum_{n=0}^{\infty} \frac{z^n}{n! \, \sigma^n} \, h_n (x),\quad x\in\mathbb R.
\end{equation}
Thus, for each $z\in\mathbb C$, we have $\mathbb E(\cdot,z)\in L^2(\mathbb R,\mu_\sigma)$, and for each $x\in\mathbb R$, $\mathbb E(x,\cdot)\in \mathbb F_\sigma(\mathbb C)$. 
By \eqref{HermiteGT} and \eqref{Eo}, 
\begin{equation}\label{vcser5a4q33}
\mathbb E(x,z)= \exp \bigg( - \frac{z^2 - 2x z}{2\sigma} \bigg),\quad x\in\mathbb R,\ z\in\mathbb C.
\end{equation}

The Segal--Bargmann transform is the unitary operator
$\mathbb S : L^2 (\mathbb R , \mu_\sigma) \rightarrow \mathbb F_\sigma (\mathbb C)
$
satisfying 
$ (\mathbb S h_n)(z) = z^n$.  Thus, for $f\in L^2(\mathbb R,\mu_\sigma)$ and $\varphi\in\mathbb F_\sigma(\mathbb C)$, 
\begin{align} \label{Sform}
(\mathbb S f)(z) &= \int_{\mathbb R} f(x) \mathbb E(x, z) \, \mu_\sigma(dx),\quad z\in\mathbb C,\\
(\mathbb S^{-1} \varphi)(x) &= \int_{\mathbb C} \varphi(z) \mathbb E(x, z) \, \nu_\sigma(dz),\quad x\in\mathbb R.\notag
\end{align}

By \eqref{vcser5a4q33} and \eqref{Sform}, 
\begin{align}
(\mathbb S f)(z)&= \int_{\mathbb R} f(x)  \exp \bigg( - \frac{z^2 -2xz}{2\sigma} \bigg)\,\frac{1}{\sqrt{2\pi\sigma}} \, \exp \bigg( - \frac{x^2}{2\sigma} \bigg) \, dx \notag\\
&= \int_{\mathbb R} f(x)  \frac{1}{\sqrt{2\pi\sigma}} \, \exp \bigg( - \frac{(x -z)^2}{2\sigma} \bigg) \, dx. \label{trsd5ws}
\end{align} 
Hence, 
\begin{equation}\label{vcxtesr4y5}
(\mathbb S f)(z) = \int_{\mathbb R} f(x + z) \, \mu_\sigma (dx)  = \int_{\mathbb R} f(x)  \,\mu_{z,\sigma} (dx),\quad z \in \mathbb R.\end{equation}
Note that, for $p\in\mathbb C[z]\subset L^2(\mathbb R,\mu_\sigma)$, formula \eqref{vcxtesr4y5} implies 
\begin{equation}\label{bhfytdy6i} 
(\mathbb S p)(z) = \int_{\mathbb R} p(x + z) \, \mu_\sigma (dx),\quad z\in\mathbb C.
\end{equation}

It will be useful for us below to give another interpretation of formula \eqref{trsd5ws}. For each $z\in\mathbb C\setminus\mathbb R$, we define the complex-valued measure $\mu_{z,\sigma} (dx)$ on $\mathbb R$
 by 
$$\mu_{z,\sigma} (dx)=\frac{1}{\sqrt{2\pi\sigma}} \, \exp \bigg( - \frac{(x -z)^2}{2\sigma} \bigg) \, dx.$$
This measure $\mu_{z,\sigma}$ can be thought of as a complex-valued Gaussian measure on $\R$. 

Then, by \eqref{trsd5ws}, for each $f\in L^2(\mathbb R,\mu_\sigma)$, we get
$$
(\mathbb S f)(z) = \int_{\mathbb R} f  \,d\mu_{z,\sigma},\quad z\in\mathbb C.$$
Note that $ \mu_{z,\sigma} (\mathbb R) = {\mathbb S} 1 = 1$.

Since both the Hermite sequence $ (h_n)_{n=0}^\infty $ and the sequence of monomials $(z^n)_{n=0}^\infty$ are Appell systems, Theorem~\ref{gdtrsw56we} implies that the restriction of $\mathbb S$ to $\mathbb C[z]$ is the Appell operator associated with the Appell sequence $(\tilde h_n)_{n=0}^\infty$, where $\tilde h_n(z)=\mathbb S z^n$. It follows from \eqref{bhfytdy6i}, the explicit form of the moments of the measure $\mu_\sigma$, and the explicit form of the Hermite polynomials~$h_n$ that 
\begin{align}
\tilde h_n(z)&=\int_{\mathbb R}(x+z)^k\,\mu_\sigma(dx)= 
\sum_{k=0}^n \binom n k z^{n-k}\int_{\mathbb R} x^{k}\,\mu_\sigma(dx)\notag\\
&=z^n+\sum_{m=1}^{\floor*{\frac n2}}\binom n{2m}z^{n-2m}\int_{\mathbb R}x^{2m}\,\mu_{\sigma}(dx)
=z^n+n!\sum_{m=1}^{\floor*{\frac n2}}z^{n-2m}\, \frac{\sigma^m}{2^m(n-2m)!\,m!}\notag\\
&=i^n\bigg((-iz)^n+n!\sum_{m=1}^{\floor*{\frac n2}}(-iz)^{n-2m}(-1)^m\, \frac{\sigma^m}{2^m(n-2m)!\,m!}\bigg)=i^nh_n(-iz).\label{fstdtqdstq}
\end{align}
By \eqref{HermiteGT} and \eqref{fstdtqdstq},
$$\sum_{n=0}^{\infty} \frac{t^n}{n!} \, \tilde h_n (z) = \exp \bigg( z t + \frac{1}{2} \sigma t^2 \bigg) . $$
 
 Denote by $a^+$ and $a^-$ the raising and lowering operators for the monic polynomial sequence $(h_n)_{n=0}^\infty$, i.e., $a^+, a^-\in\mathcal L(\mathbb C[z])$ and  
 \begin{equation} 	\label{zcx876-=rhgfd}
	(a^+ h_n) (z) = h_{n+1} (z), \quad (a^-h_n)(z )= n h_{n-1} (z), \quad\quad n \in \mathbb N_0 .
\end{equation}
(These operators are often called the creation and annihilation operators, respectively.)

Let $Z\in\mathcal L(\mathbb C[z])$ denote the operator of multiplication by the variable $z$. By \eqref{gfxsresa4ea4} and \eqref{zcx876-=rhgfd}, $Z=a^++\sigma a^-$. The operator $Z$ is essentially self-adjoint in $L^2(\mathbb R,\mu_\sigma)$ and its closure is the operator of multiplication by the variable in $L^2(\mathbb R,\mu_\sigma)$.

Next, it is easy to see that the operator $a^-$ is closable in $L^2(\mathbb R,\mu_\sigma)$ and we keep the notation $a^-$ for its closure. Then, for each $z \in \mathbb C$, the coherent state $\mathbb E (\cdot, z)$ is an eigenvector of the operator $\sigma  a^-$ belonging to the eigenvalue $z$.

Obviously, 
$$\mathbb Sa^+\mathbb S^{-1}=Z,\quad \mathbb Sa^-\mathbb S^{-1}=D,$$
so that
$$\mathbb SZ\mathbb S^{-1}=Z+\sigma D.$$
The operator $Z+\sigma D\in\mathcal L(\mathbb C[z])$ is essentially self-adjoint in $\mathbb F_\sigma(\mathbb C)$.

\section{A generalized Segal--Bargmann transform for $\pi_{\alpha,\sigma}$}\label{gcfcfcdtrd}
	
		For  $\alpha>0$ and $\sigma > 0$, let the probability measure $\pi_{\alpha,\sigma}$ on $\alpha\mathbb N_0$ be defined by \eqref{cdtesa4wq4}. Let $(c_n)_{n=0}^\infty$ denote the monic polynomial sequence that is orthogonal with respect to~$\pi_{\alpha,\sigma}$. 	  The $(c_n)_{n=0}^\infty$ is a Sheffer sequence having the generating function
\begin{equation} \label{tsrear}
\sum_{n=0}^{\infty} \frac{t^n}{n!} \, c_n (z)=\exp \bigg( \frac{z}{\alpha}\,\log (1+t \alpha) - \frac{\sigma t}{\alpha}\bigg).
\end{equation}
The sequence $(c_n)_{n=0}^\infty$   satisfies the recurrence formula 
\begin{equation} \label{rcrCharlier} 
z s_n(z) = c_{n+1} (z) + (\alpha n + \sigma/\alpha ) c_n (z) + \sigma n  c_{n-1} (z),\quad n\in\mathbb N_0 . \end{equation} 
It follows from \eqref{rcrCharlier}  that $\|c_n\|_{L^2(\alpha\mathbb N_0,\pi_{\alpha,\sigma})}^2=n!\,\sigma^n$. 
		
\begin{remark}\label{rtstsw5w456}
Let $\tilde \pi_{\alpha,\sigma}$ be the centered measure  $\pi_{\alpha,\sigma}$, see 
  \eqref{fxdrsre}.  The Fourier transform of $\tilde \pi_{\alpha,\sigma}$ is
$$\int_{\alpha\mathbb N_0}e^{ixy}\,\tilde\pi_{\alpha,\sigma}(dy)=\exp\bigg(\frac\sigma{\alpha^2}(e^{i\alpha y}-1-i\alpha y)\bigg).$$
Hence,
\begin{equation}\label{crtse5ysw5ue64}
\lim_{\alpha\to0}\int_{\alpha\mathbb N_0}e^{ixy}\,\tilde\pi_{\alpha,\sigma}(dy)=\exp\bigg(-\frac\sigma2\,y^2\bigg)=\int_{\mathbb R}e^{ixy}\,\mu_\sigma(dy).\end{equation}
Thus,   $\tilde \pi_{\alpha,\sigma}$ converges weakly to $\mu_\sigma$ as $\alpha\to0$. 
\end{remark}

\begin{remark}\label{ytst5sw5ewwu}
Formula \eqref{crtse5ysw5ue64} admits the following interpretation from the viewpoint of quantum physics.  Denote by $Z$ the  (unbounded) operator of multiplication by the variable in $L^2(\alpha\mathbb N_0,\pi_{\alpha,\sigma})$: $(Zf)(z)=zf(z)$ for $f(z)$ from the domain of $Z$. 
Consider the complex space $\ell_2$ with its standard orthonormal basis $(e_n)_{n\in\mathbb N_0}$. Here  $e_n=(0,\dots,0,1,0,0,\dots)$, where $1$ is at the $n$th place. Define a unitary operator\linebreak $I:L^2(\alpha\mathbb N_0,\pi_{\alpha,\sigma})\to\ell_2$ satisfying  $Ic_n=(n!\,\sigma^n)^{1/2}e_n$ ($n\in\mathbb N_0$).
Next, define a self-adjoint (unbounded) linear operator $\rho_{\alpha,\sigma}=IZI^{-1}$ in $\ell_2$. Formula \eqref{rcrCharlier} implies 
\begin{equation}\label{fxreazrea4rwa}
\rho_{\alpha,\sigma}\,e_n=\sqrt{\sigma(n+1)}\,e_{n+1}+(\alpha n+\sigma/\alpha)e_n+\sqrt{\sigma n}\,e_{n-1},\quad n\in\mathbb N_0.\end{equation}
Consider  a creation operator $a^+$ and an annihilation operator $a^-$ in $\ell_2$  satisfying $a^+e_n=\sqrt{n+1}\,e_{n+1}$ and $a^-e_n=\sqrt n\,e_n$ ($n\in\mathbb N_0$). The operators $a^+$ and $a^-$ are adjoint of each other and satisfy the commutation relation $[a^-,a^+]=1$. By \eqref{fxreazrea4rwa},
$$\rho_{\alpha,\sigma}=\alpha(a^++\sqrt\sigma/\alpha)(a^-+\sqrt\sigma/\alpha).$$
Note that the operators $A_{\alpha,\sigma}^+=a^++\sqrt\sigma/\alpha$ and $A_{\alpha,\sigma}^-=a^-+\sqrt\sigma/\alpha$ are also adjoint of each other and satisfy the commutation relation $[A_{\alpha,\sigma}^-,A_{\alpha,\sigma}^+]=1$. 
It follows from \cite{AW} (see also \cite{GGPS}) that the operators $A_{\alpha,\sigma}^+$, $A_{\alpha,\sigma}^-$ form a representation of the canonical commutation relations (CCR) describing an infinite free Bose gas at zero temperature.   Hence,  the operator $A_{\alpha,\sigma}^+A_{\alpha,\sigma}^-$ is the particle density of this gas with 
 average density $\sigma/\alpha^2$. We observe that
\begin{align*}
\rho_{\alpha,\sigma}-\frac\sigma\alpha&=\alpha\bigg(A_{\alpha,\sigma}^+A_{\alpha,\sigma}^--\frac\sigma{\alpha^2}\bigg)\\&=\sqrt\sigma (a^++a^-)+\alpha a^+a^-\to \sqrt\sigma(a^++a^-)\quad\text{as }\alpha\to0.
\end{align*}
The limiting operator $\sigma(a^++a^-)$ describes a free Bose field. 

\end{remark}

We define a generalized Segal--Bargmann transform $\mathcal S:L^2(\alpha\mathbb N_0,\pi_{\alpha,\sigma})\to \mathbb F_\sigma(\mathbb C)$ as a unitary operator that satisfies $(\mathcal Sc_n)(z)=z^n$ for $n\in\mathbb N_0$.  

For each $z\in\mathbb C$, the corresponding coherent state is given by
 \begin{equation} \label{cfdyrdstrswt6}
\mathcal E(x, z)= \sum_{n=0}^{\infty} \frac{z^n}{n! \, \sigma^n} \, c_n (x),\quad x\in\alpha\mathbb N_0,\ z\in\mathbb C.
\end{equation}
For each $z\in\mathbb C$, we have $\mathcal E(\cdot,z)\in L^2(\alpha\mathbb N_0,\pi_{\alpha,\sigma})$, and for each $x\in\alpha\mathbb N_0$, $\mathcal E(x,\cdot)\in \mathbb F_\sigma(\mathbb C)$. 
By \eqref{tsrear} and \eqref{cfdyrdstrswt6}, 
\begin{equation}\label{vcser5a4q3}
\mathcal E(\alpha n,z)=\left( 1 + \frac{\alpha z}{\sigma} \right)^{n} \exp \left(- \frac{z}{\alpha} \right),\quad n\in\mathbb N_0,\ z\in\mathbb C.
\end{equation}

Thus, for $f\in L^2(\alpha\mathbb N_0,\pi_{\alpha,\sigma})$ and $\varphi\in\mathbb F_\sigma(\mathbb C)$, 
\begin{align} \label{rtd6d6we6m}
(\mathcal S f)(z)& = \int_{\alpha\mathbb N_0} f(x) \mathcal E(x, z) \, \pi_{\alpha,\sigma}(dx),\quad z\in\mathbb C,\\
(\mathcal S^{-1} \varphi)(x) &= \int_{\mathbb C} \varphi(z) \mathcal E(x, z) \, \nu_\sigma(dz),\quad x\in\alpha\mathbb N_0.\notag
\end{align}

For  $\alpha>0$ and $z\in\mathbb C$, we define a complex-valued measure $\pi_{\alpha,z}$ on $\alpha\mathbb N_0$ by 
$$\pi_{\alpha,z}  = \exp \bigg(- \frac{z}{{\alpha}^2}\bigg) \sum_{n=0}^{\infty} \frac{1}{n!} \bigg(\frac{z}{{\alpha}^2}\bigg)^n {\delta}_{\alpha n} . $$
Thus, for $z>0$, $\pi_{\alpha,z}$ is the probability distribution defined in \eqref{cdtesa4wq4}. Note that, for $z\in(-\infty,0)$, $\pi_{\alpha,z}$ is a signed (real-valued) measure.

\begin{theorem}\label{cfyzdtrdZSTA}
For each $f\in L^2(\alpha\mathbb N_0,\pi_{\alpha,\sigma})$, we have
\begin{equation}\label{ydtrst}
(\mathcal Sf)(z)= \int_{\alpha \mathbb N_0} f \, d\pi_{\alpha,\sigma+\alpha z},\quad z\in\mathbb C.
\end{equation}
In particular, for $z>-\frac\sigma\alpha$, the integration (summation) in \eqref{ydtrst} is with respect to the probability distribution $\pi_{\alpha,\sigma+\alpha z}$. The complex-valued series on the right-hand side of~\eqref{ydtrst} converges absolutely and uniformly on compact sets in $\mathbb C$.  
\end{theorem}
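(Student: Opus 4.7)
The approach is to reduce~\eqref{ydtrst} to a direct algebraic rearrangement of the coherent-state integral formula~\eqref{rtd6d6we6m}. Using the explicit form~\eqref{vcser5a4q3} of $\mathcal{E}(\alpha n, z)$ together with the definition~\eqref{cdtesa4wq4} of $\pi_{\alpha,\sigma}$, one collects the exponential and power factors to obtain
$$\mathcal{E}(\alpha n, z)\,\pi_{\alpha,\sigma}(\{\alpha n\}) \;=\; \exp\!\bigg(-\frac{\sigma+\alpha z}{\alpha^2}\bigg)\frac{1}{n!}\bigg(\frac{\sigma+\alpha z}{\alpha^2}\bigg)^{n} \;=\; \pi_{\alpha,\sigma+\alpha z}(\{\alpha n\}).$$
Inserting this identity into the representation of $\mathcal{S}f$ in~\eqref{rtd6d6we6m} immediately yields~\eqref{ydtrst}. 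For real $z > -\sigma/\alpha$, the quantity $\sigma+\alpha z$ is positive, so the right-hand side is the integral against the genuine probability measure of the family~\eqref{cdtesa4wq4}.

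Before invoking~\eqref{rtd6d6we6m}, I would verify that this integral formula holds for every $f\in L^2(\alpha\mathbb{N}_0,\pi_{\alpha,\sigma})$ and every $z\in\mathbb{C}$. The coherent state satisfies $\mathcal{E}(\cdot,z)\in L^2(\alpha\mathbb{N}_0,\pi_{\alpha,\sigma})$ with $\|\mathcal{E}(\cdot,z)\|^2 = \exp(|z|^2/\sigma)$, as one reads off from~\eqref{cfdyrdstrswt6} and the normalization $\|c_n\|^2 = n!\sigma^n$. Thus $f\mapsto \int f(x)\mathcal{E}(x,z)\pi_{\alpha,\sigma}(dx)$ is a bounded linear functional on $L^2$. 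Plugging in a basis element $c_k$ and using orthogonality of the $c_n$ returns $z^k = (\mathcal{S}c_k)(z)$, so the two functionals coincide on the dense subspace of polynomials, and hence everywhere on $L^2$.

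Finally, to obtain absolute and uniformly-on-compacts convergence of the series on the right-hand side of~\eqref{ydtrst}, I would apply Cauchy--Schwarz with weight $\pi_{\alpha,\sigma}(\{\alpha n\})$:
$$\sum_{n=0}^\infty |f(\alpha n)|\,|\pi_{\alpha,\sigma+\alpha z}(\{\alpha n\})| \;\leq\; \|f\|_{L^2(\alpha\mathbb{N}_0,\pi_{\alpha,\sigma})}\Bigg(\sum_{n=0}^\infty \frac{|\pi_{\alpha,\sigma+\alpha z}(\{\alpha n\})|^2}{\pi_{\alpha,\sigma}(\{\alpha n\})}\Bigg)^{\!1/2}.$$
A short computation shows that the inner sum is a closed-form exponential in $z$, continuous on $\mathbb{C}$, which delivers both pointwise absolute convergence and uniform convergence on compact subsets. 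The only real obstacle is the careful bookkeeping of exponential and power-of-$n$ factors when identifying $\mathcal{E}(\alpha n,z)\,\pi_{\alpha,\sigma}(\{\alpha n\})$ with the masses of $\pi_{\alpha,\sigma+\alpha z}$; once that matching is carried out, everything else reduces to standard Hilbert-space arguments.
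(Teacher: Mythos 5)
Your proposal is correct and follows essentially the same route as the paper: substituting the explicit coherent state \eqref{vcser5a4q3} into \eqref{rtd6d6we6m}, recognizing the resulting masses as those of $\pi_{\alpha,\sigma+\alpha z}$, and then applying Cauchy--Schwarz (with the $\pi_{\alpha,\sigma}$ weight) to get absolute and locally uniform convergence. Your extra verification that \eqref{rtd6d6we6m} holds for all $f\in L^2(\alpha\mathbb N_0,\pi_{\alpha,\sigma})$ and all $z\in\mathbb C$ via boundedness of the coherent-state functional is a harmless addition the paper takes for granted, not a different method.
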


\begin{remark}
For $\alpha=1$, formula  \eqref{ydtrst} was mentioned (without proof) in \cite{KL}. 
\end{remark}

\begin{proof}[Proof of Theotrem \ref{cfyzdtrdZSTA}]
By \eqref{vcser5a4q3} and \eqref{rtd6d6we6m}, for $f \in  L^2 (\alpha {\mathbb N}_0, \pi_{\alpha,\sigma} )$,
\begin{align*}
(\mathcal S f)(z) &= \sum_{n=0}^{\infty} f(\alpha n) \left( 1 + \frac{\alpha z}{\sigma} \right)^n \exp \left(- \frac{z}{\alpha} \right) \exp \left(- \frac{\sigma}{\alpha^2} \right) \frac{1}{n!} \left(\frac{\sigma}{\alpha^2} \right)^n \\
&= \exp \left(-\frac{\sigma+\alpha z}{\alpha^2} \right) \sum_{n=0}^{\infty} f(\alpha n) \, \frac{1}{n!} \left(\frac{\sigma+\alpha z}{\alpha^2} \right)^n= \int_{\alpha \mathbb N_0} f \, d\pi_{\alpha,\sigma+\alpha z}. 
\end{align*}
Next, by the Cauchy  inequality,
\begin{align*}
&\sum_{n=0}^{\infty} |f(\alpha n)| \, \frac{1}{n!} \left|\frac{\sigma+\alpha z}{\alpha^2}  \right|^n \le
\sum_{n=0}^{\infty} |f(\alpha n)| \, \frac{1}{n!} \left(\frac{\sigma+|\alpha z|}{\alpha^2}  \right)^n \\
&\quad=\sum_{n=0}^{\infty} |f(\alpha n)| \, \left(\frac{1}{n!}\right)^{\frac12} \left(\frac{\sigma}{\alpha^2}\right)^{\frac n2}\cdot†
\left(\frac{1}{n!}\right)^{\frac12} 
\left(\frac{\sigma+|\alpha z|}{\alpha^2}  \right)^n \left(\frac{\sigma}{\alpha^2}\right)^{-\frac n2}\\
&\quad\le\left(
\sum_{n=0}^{\infty} |f(\alpha n)|^2\,\frac1{n!}\,\left(\frac{\sigma}{\alpha^2}\right)^{n}
\right)^{\frac12}
\left(
\sum_{n=0}^{\infty} \frac1{n!}\,\bigg(\frac{(\sigma+|\alpha z|)^2}{\sigma}\right)^{n}
\bigg)^{\frac12}\\
&=\exp\bigg(\frac{\sigma}{\alpha^2}\bigg)\|f\|_{L^2(\alpha\mathbb N_0,\pi_{\alpha,\sigma})}\left(
\sum_{n=0}^{\infty} \frac1{n!}\,\left(\frac{(\sigma+|\alpha z|)^2}{\sigma}\right)^{n}
\right)^{\frac12}.
\end{align*}
Hence, the complex-valued series on the right-hand side of \eqref{ydtrst} converges absolutely and uniformly on compact sets in $\mathbb C$.
\end{proof}

Define the monic polynomial sequence $(T_{\alpha,n})_{n=0}^\infty$ by 
\begin{equation} \label{TouchardPln} T_{\alpha, n} (z)=\sum_{k=1}^{n} S(n,k)\alpha^{n-k} z^k = \alpha^n T_n \bigg( \frac{z}{\alpha} \bigg), \quad n\in\mathbb N,
\end{equation} 
where $(T_n )_{n=0}^\infty$ is the sequence of Touchard polynomials, see \eqref{cfgxt5w}. 
By \eqref{TouchardGT}, $(T_{\alpha,n})_{n=0}^\infty$ is a polynomial sequence of binomial type with generating function $$ \sum_{n=0}^{\infty} \frac{t^n}{n!} \, T_{\alpha ,n} (z) = \exp \bigg( z\, \frac{e^{t\alpha} - 1}{\alpha} \bigg). $$
Let $\mathcal T_\alpha$ denote the umbral operator associated with the binomial sequence $(T_{\alpha,n})_{n=0}^\infty$, i.e., $\mathcal T_\alpha z^n=T_{\alpha,n}(z)$ ($n\in\mathbb N_0$). 

Let $((\cdot 
	\mid \alpha)_n)_{n=0}^\infty$ denote the sequence of generalized factorials with increment $\alpha$, i.e., for $z\in\mathbb C$, $(z \mid \alpha)_0 = 1$ and 
	\begin{equation}\label{cyrtdstevf}
			(z\mid \alpha)_n = z(z- \alpha)(z-2\alpha) \dotsm (z-(n-1)\alpha),\quad n\in\mathbb N.\end{equation}
						In particular, $(z\mid1)_n=(z)_n$ is a falling factorial. The $((\cdot \mid \alpha)_n)_{n=0}^\infty$
						is a binomial sequence with generating function
						$$\sum_{n=0}^{\infty} \frac{t^n}{n!} \, (z\mid \alpha)_n = \exp\left(\frac z\alpha \log (1+\alpha t)\right).$$
By \eqref{csresr5as5a} and \eqref{cyrtdstevf}, 
\begin{equation}\label{cdxsedaseaewa}
(z\mid\alpha)_n=\sum_{k=1}^n s(n,k)\alpha^{n-k}z^k, \quad n\in\mathbb N.
\end{equation}

Let $\mathcal F_\alpha$ denote the umbral operator associated with  $((\cdot 
	\mid \alpha)_n)_{n=0}^\infty$, i.e., 
	$$\mathcal F_\alpha z^n=(z\mid\alpha)_n,\quad n\in\mathbb N_0.$$ 
Using  formulas \eqref{csresr5as5a},  \eqref{fxresera}, \eqref{TouchardPln} and  \eqref{cdxsedaseaewa}, one can easily show that $\mathcal F_\alpha$ is the inverse of the operator $\mathcal T_\alpha$, i.e., $\mathcal F_\alpha=\mathcal T_\alpha^{-1}$. 

Note that the restriction of $\mathcal S$ to $\mathbb C[z]$ is a bijective operator in $\mathbb C[z]$, for which we keep the notation $\mathcal S$,

\begin{proposition} \label{bvcsrea5rw4qy} We have the following equalities of linear operators in $\mathbb C[z]$:
$$\mathcal S=E_{\sigma/\alpha}\mathcal T_\alpha,\quad \mathcal S^{-1}=\mathcal F_\alpha E_{-\sigma/\alpha}.$$
Here $E_h$ denotes the operator of shift by $h$.
\end{proposition}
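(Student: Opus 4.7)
The plan is as follows. The two stated equalities are equivalent: taking inverses of $\mathcal{S}^{-1} = \mathcal{F}_\alpha E_{-\sigma/\alpha}$ and using the identity $\mathcal{T}_\alpha = \mathcal{F}_\alpha^{-1}$ recorded just before the statement, together with the obvious $E_{\sigma/\alpha} = (E_{-\sigma/\alpha})^{-1}$, yields $\mathcal{S} = E_{\sigma/\alpha}\mathcal{T}_\alpha$. So it suffices to verify one of them; I would check the second. Evaluated on $z^n$, that equality reads
\[
c_n(z) \;=\; \mathcal{F}_\alpha (z - \sigma/\alpha)^n, \qquad n \in \mathbb{N}_0.
\]

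To establish this I would compare exponential generating functions in a formal indeterminate $t$. Expanding $(z - \sigma/\alpha)^n$ binomially and applying $\mathcal{F}_\alpha$ termwise (a finite sum, with $\mathcal{F}_\alpha z^k = (z \mid \alpha)_k$), summing against $t^n/n!$ and rearranging the double sum as a Cauchy product gives
\[
\sum_{n=0}^\infty \frac{t^n}{n!}\,\mathcal{F}_\alpha (z-\sigma/\alpha)^n
= \Bigl(\sum_{k=0}^\infty \frac{t^k}{k!}(z\mid\alpha)_k\Bigr)\exp(-\sigma t/\alpha).
\]
The generating function of the binomial sequence $((\cdot\mid\alpha)_n)_{n=0}^\infty$ recalled in the excerpt identifies the first factor as $\exp\bigl(\tfrac{z}{\alpha}\log(1+\alpha t)\bigr)$, so the whole right-hand side equals
\[
\exp\Bigl(\tfrac{z}{\alpha}\log(1+\alpha t)-\tfrac{\sigma t}{\alpha}\Bigr),
\]
which is exactly the generating function \eqref{tsrear} of $(c_n)_{n=0}^\infty$. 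Matching the coefficient of $t^n/n!$ yields the desired polynomial identity $c_n(z) = \mathcal{F}_\alpha (z-\sigma/\alpha)^n$.

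There is no substantive obstacle here; the argument is essentially a one-line manipulation of generating functions. The only point that requires care is the interpretation of the series: it is cleanest to read them as formal power series in $t$ with coefficients in $\mathbb{C}[z]$, so that the Cauchy-product rearrangement is purely algebraic. Alternatively one may observe that for each fixed $z\in\mathbb{C}$ both generating functions are entire in $t$, and the formal equality is simultaneously a pointwise analytic equality.
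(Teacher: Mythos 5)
Your proposal is correct, but it follows a genuinely different route from the paper. The paper's proof is operator-theoretic: it introduces $\mathcal U=Z+\sigma/\alpha$ and $\mathcal V=\alpha D+1$ satisfying $[\mathcal V,\mathcal U]=\alpha$, uses the three-term recurrence \eqref{rcrCharlier} to identify $\mathcal S Z\mathcal S^{-1}=\mathcal U\mathcal V$, and then computes $\mathcal S z^n=((\mathcal U\mathcal V)^n 1)(z)$ via Katriel's normal-ordering theorem, arriving at $\mathcal S z^n=T_{\alpha,n}(z+\sigma/\alpha)$, i.e.\ $\mathcal S=E_{\sigma/\alpha}\mathcal T_\alpha$. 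You instead verify $\mathcal S^{-1}=\mathcal F_\alpha E_{-\sigma/\alpha}$ directly: expanding $(z-\sigma/\alpha)^n$, applying $\mathcal F_\alpha$ termwise and summing against $t^n/n!$ reproduces, via the generating function of $((\cdot\mid\alpha)_n)_{n=0}^\infty$, exactly the generating function \eqref{tsrear} of $(c_n)_{n=0}^\infty$; matching coefficients gives $c_n(z)=\mathcal F_\alpha(z-\sigma/\alpha)^n$, and the first equality follows by inverting (using $\mathcal F_\alpha=\mathcal T_\alpha^{-1}$). In effect you prove formula \eqref{dzawea4tqy44} of Corollary~\ref{fxtrstesw55u} first and read the Proposition off from it, reversing the paper's logical order; your input is the generating function \eqref{tsrear}, whereas the paper's input is the recurrence \eqref{rcrCharlier} plus Katriel's formula --- both facts the paper records without proof, so either starting point is legitimate. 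Your argument is more elementary and shorter; the paper's buys the explicit Weyl-algebra structure ($\mathcal U$, $\mathcal V$, $\rho=\mathcal U\mathcal V$) that it exploits in the sequel (Corollary~\ref{fxtrstesw55u} via $T_{\alpha,n}$, and Section~\ref{fxdzdzsese} on $U$ and $V$), so the normal-ordering detour is not gratuitous there. One small correction to your closing remark: for fixed $z$ the generating functions are not entire in $t$ --- $\exp\bigl(\tfrac z\alpha\log(1+\alpha t)\bigr)=(1+\alpha t)^{z/\alpha}$ is singular at $t=-1/\alpha$ --- but they are analytic near $t=0$, and in any case your primary reading as formal power series in $t$ with coefficients in $\mathbb C[z]$ is the right one and fully justifies the coefficient comparison.
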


\begin{proof} We use ideas similar to those in \cite[Section~4]{KL}.
We define linear operator $\mathcal U$ and $\mathcal V$ in $\mathbb C[z]$ by 
\begin{equation}\label{cxdsarw4ry}
\mathcal U=Z+\frac\sigma\alpha,\quad \mathcal V= \alpha D+1.
\end{equation}
 As easily seen, the operators $\mathcal U$, $\mathcal V$ satisfy the commutation relation $[\mathcal V,\mathcal U]=\alpha$. Hence, they are generators of a Weyl algebra, see e.g.\ \cite[Section~5.6]{12-Mansour}.  

Let $$ \rho = \mathcal U\mathcal V = Z + \alpha Z D + \frac{\sigma}{\alpha} + \sigma D. $$ Hence, 
  \begin{equation} \label{cfxsttestxhxgg} \rho z^n = z^{n+1} + \left(\alpha n + \frac{\sigma}{\alpha}\right) z^n + \sigma n z^{n-1}. \end{equation} 
  By \eqref{rcrCharlier}  and \eqref{cfxsttestxhxgg}, we have the following equality of linear operators in $\mathbb C[z]$: 
  \begin{equation}\label{cxsrsareare5}\rho=\mathcal S Z\mathcal S^{-1}.
  \end{equation}
    Therefore, 
 \begin{equation}\label{fxrsrtest5u5}\mathcal S z^n=(\rho^n1)(z).\end{equation} 
  
  By Katriel's theorem  \cite{10-Katriel},
\begin{equation}\label{vcftsxreasre}
 \rho^n =(\mathcal U\mathcal V)^n =\sum_{k=1}^{n} S(n,k)\, {\alpha}^{n-k} \mathcal U^k \, \mathcal V^k .\end{equation}
 For each $k\in\mathbb N$,
\begin{equation}\label{xdtarw4}
(\mathcal U^k  \mathcal V^k1)(z)=(\mathcal U^k 1)(z)= \bigg(z+\frac\sigma\alpha\bigg)^k. \end{equation}
By \eqref{TouchardPln} and \eqref{fxrsrtest5u5}--\eqref{xdtarw4}, we get
 \begin{equation}\label{vctrssw5y3w}
 \mathcal Sz^n=\sum_{k=1}^{n} S(n,k)\, {\alpha}^{n-k} \bigg(z+\frac\sigma\alpha\bigg)^k=T_{\alpha,n}\bigg(z+\frac\sigma\alpha\bigg),
 \end{equation}
  which implies $\mathcal S=E_{\sigma/\alpha}\mathcal T_\alpha$. Since $\mathcal F_\alpha=\mathcal T_\alpha^{-1}$, we obtain $\mathcal S^{-1}=\mathcal F_\alpha E_{-\sigma/\alpha}$.  
 \end{proof}

\begin{corollary}\label{fxtrstesw55u} We have, for each $n\in\mathbb N$, 
\begin{equation} \label{ThmCharlier1} z^n = T_{\alpha,n}(\sigma/\alpha)+\sum_{i=1}^n\bigg(\sum_{k=i}^n \binom nk T_{\alpha,n-k}(\sigma/\alpha)S(k,i) \alpha^{k-i} \bigg) c_i (z) , \end{equation} 
and 
\begin{align}  c_n (z) &= \sum_{k=0}^n \binom{n}{k} \bigg(- \frac{\sigma}{\alpha} \bigg)^{n-k} (z \mid \alpha)_k \label{dzawea4tqy44} \\ &= \bigg(- \frac{\sigma}{\alpha} \bigg)^n + \sum_{i=1}^n \bigg( \sum_{k=0}^{n-i} \binom{n}{k} \, s(n-k,i) \, \alpha^{n-2k-i} \, (-1)^k \sigma^k \bigg) z^i .  \label{ThmCharlier3} \end{align}
\end{corollary}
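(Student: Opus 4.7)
The plan is to obtain all three formulas as direct consequences of the factorizations $\mathcal S=E_{\sigma/\alpha}\mathcal T_\alpha$ and $\mathcal S^{-1}=\mathcal F_\alpha E_{-\sigma/\alpha}$ furnished by Proposition~\ref{bvcsrea5rw4qy}, together with the explicit expansions \eqref{TouchardPln} and \eqref{cdxsedaseaewa} of $T_{\alpha,n}$ and $(z\mid\alpha)_n$ in Stirling numbers.

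For \eqref{dzawea4tqy44}, I would apply $\mathcal S^{-1}$ to $z^n$, using that $c_n=\mathcal S^{-1}z^n$: first, $(E_{-\sigma/\alpha}z^n)(z)=(z-\sigma/\alpha)^n=\sum_{k=0}^n\binom nk(-\sigma/\alpha)^{n-k}z^k$; then apply $\mathcal F_\alpha$ term by term using $\mathcal F_\alpha z^k=(z\mid\alpha)_k$. This yields \eqref{dzawea4tqy44} immediately. For \eqref{ThmCharlier3}, substitute the Stirling-of-the-first-kind expansion \eqref{cdxsedaseaewa} of $(z\mid\alpha)_k$ into \eqref{dzawea4tqy44}, swap the two sums, and re-index with $k\mapsto n-k$ to match the stated coefficient $\sum_{k=0}^{n-i}\binom nk s(n-k,i)\alpha^{n-2k-i}(-1)^k\sigma^k$; the constant term $(-\sigma/\alpha)^n$ corresponds to the $k=0$ summand (where $(z\mid\alpha)_0=1$).

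For \eqref{ThmCharlier1}, the key observation is that if we expand $z^n=\sum_{i=0}^n a_{n,i}c_i(z)$ and apply $\mathcal S$, then using $\mathcal Sc_i(z)=z^i$ together with $\mathcal Sz^n=E_{\sigma/\alpha}\mathcal T_\alpha z^n=T_{\alpha,n}(z+\sigma/\alpha)$ (which was already computed in \eqref{vctrssw5y3w}), we obtain
\[
T_{\alpha,n}(z+\sigma/\alpha)=\sum_{i=0}^n a_{n,i}\,z^i.
\]
Thus $a_{n,i}$ is simply the coefficient of $z^i$ in $T_{\alpha,n}(z+\sigma/\alpha)$. To read these coefficients off, I would use the binomial-type identity $T_{\alpha,n}(z+\zeta)=\sum_{k=0}^n\binom nk T_{\alpha,k}(z)T_{\alpha,n-k}(\zeta)$ (valid because $(T_{\alpha,n})_{n=0}^\infty$ is a sequence of binomial type) with $\zeta=\sigma/\alpha$, and then expand each $T_{\alpha,k}(z)=\sum_{i=1}^k S(k,i)\alpha^{k-i}z^i$ via \eqref{TouchardPln}. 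Swapping the order of summation yields $a_{n,0}=T_{\alpha,n}(\sigma/\alpha)$ and $a_{n,i}=\sum_{k=i}^n\binom nk T_{\alpha,n-k}(\sigma/\alpha)S(k,i)\alpha^{k-i}$ for $i\ge1$, which is exactly \eqref{ThmCharlier1}.

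There is no real obstacle here; the argument is entirely formal once Proposition~\ref{bvcsrea5rw4qy} is in place. The only care required is bookkeeping: getting the ranges of summation right (in particular isolating the $k=0$ term in \eqref{dzawea4tqy44}, which produces the stand-alone $(-\sigma/\alpha)^n$ in \eqref{ThmCharlier3} and $T_{\alpha,n}(\sigma/\alpha)$ in \eqref{ThmCharlier1}) and correctly re-indexing when matching the stated forms.
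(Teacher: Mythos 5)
Your proposal is correct and follows essentially the same route as the paper: \eqref{dzawea4tqy44} by applying $\mathcal S^{-1}=\mathcal F_\alpha E_{-\sigma/\alpha}$ to $z^n$, \eqref{ThmCharlier3} by inserting \eqref{cdxsedaseaewa}, and \eqref{ThmCharlier1} by expanding $\mathcal Sz^n=T_{\alpha,n}(z+\sigma/\alpha)$ via binomiality and \eqref{TouchardPln} and then transferring back with $\mathcal S^{-1}$ (your coefficient-matching of $z^n=\sum_i a_{n,i}c_i(z)$ is just a rephrasing of applying $\mathcal S^{-1}$ to that expansion, as the paper does).
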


\begin{proof} We have, for $n\in\mathbb N$,
\begin{align*}c_n(z)&=\mathcal S^{-1} z^n= \mathcal F_\alpha E_{-\sigma/\alpha}z^n\\
&=\sum_{k=0}^n\binom nk \bigg(-\frac\sigma\alpha\bigg)^{n-k}\mathcal F_\alpha z^k=\sum_{k=0}^n\binom nk \bigg(-\frac\sigma\alpha\bigg)^{n-k}(z\mid\alpha)_k\,, 
\end{align*}
which implies \eqref{dzawea4tqy44}. Formula \eqref{ThmCharlier3} follows from \eqref{cdxsedaseaewa} and \eqref{dzawea4tqy44}.

Since $(T_{\alpha ,n})_{n=0}^\infty$  is a polynomial sequence of binomial type, formulas~\eqref{TouchardPln} and~\eqref{vctrssw5y3w} imply, for $n\in\mathbb N$, 
\begin{align}
 \mathcal Sz^n&=\sum_{k=0}^n\binom nk T_{\alpha,n-k}(\sigma/\alpha)
T_{\alpha,k}(z)\notag\\
 &=T_{\alpha,n}(\sigma/\alpha)+\sum_{k=1}^n\binom nk T_{\alpha,n-k}(\sigma/\alpha)\sum_{i=1}^{k} S(k,i) \alpha^{k-i} z^i \notag\\
  &=T_{\alpha,n}(\sigma/\alpha)+\sum_{i=1}^n\bigg(\sum_{k=i}^n \binom nk T_{\alpha,n-k}(\sigma/\alpha)S(k,i) \alpha^{k-i} \bigg)z^i.\label{vxzrarwaq}
\end{align}
Applying the operator $\mathcal S^{-1}$ to \eqref{vxzrarwaq} gives 
 \eqref{ThmCharlier1}.
\end{proof}

\begin{remark}In the case $\alpha=1$, the explicit form of the Charlier polynomials is well known. The reader is advised to compare formula \eqref{ThmCharlier1} in the case $\alpha=1$ with an expansion of $\binom zn$ in the Charlier polynomials discussed  in Example~1 on page 479 of  \cite{Jordan}. Corollary~\ref{fxtrstesw55u} can also be derived from \cite[Theorem~4.6]{KL} by taking the limit $\beta\to0$. 
\end{remark}

 Denote by $\partial^+$ and $\partial^-$ the raising and lowering operators for the monic polynomial sequence $(c_n)_{n=0}^\infty$, i.e., $\partial^+, \partial^-\in\mathcal L(\mathbb C[z])$ and  
\[
	(\partial^+ c_n) (z) = c_{n+1} (z), \quad (\partial^-c_n)(z) = n c_{n-1} (z), \quad\quad n \in \mathbb N_0 .
\]

The following proposition can be easily shown.

\begin{proposition} 	\label{fggh6}
The operator $\partial^-$ is closable in $L^2(\alpha\mathbb N_0,\pi_{\alpha,
\sigma})$, and let us keep the notation $\partial^-$ for its closure. Then, for each $z \in \mathbb C$, the coherent state $\mathcal E(\cdot,z)$ is eigenfunction of the operator $\partial^-$ belonging to the eigenvalue $z$.
\end{proposition}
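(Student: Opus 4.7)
The plan is to establish closability via the standard criterion that an operator with densely defined adjoint is closable, and then to obtain the eigenvalue identity by a truncation argument applied to the series defining $\mathcal E(\cdot,z)$.

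For closability, the crucial ingredient is that the Charlier sequence $(c_n)_{n=0}^\infty$ forms an orthogonal basis of $L^2(\alpha\mathbb N_0,\pi_{\alpha,\sigma})$ with $\|c_n\|_{L^2}^2 = n!\,\sigma^n$, so in particular $\mathbb C[z]$ is dense. Using $\partial^- c_n = n\,c_{n-1}$ and $\partial^+ c_m = c_{m+1}$, I would compute the matrix elements $(\partial^- c_n, c_m)_{L^2}$ and compare with $(c_n, \partial^+ c_m)_{L^2}$; the factor $\sigma^n$ in the norm then gives the formal-adjoint identity $(\partial^- p, q)_{L^2} = (p, \sigma^{-1}\partial^+ q)_{L^2}$ for all $p,q\in\mathbb C[z]$. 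Consequently every polynomial lies in $\operatorname{dom}((\partial^-)^*)$, so the adjoint is densely defined and $\partial^-$ is closable.

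For the eigenfunction claim, I would first verify via Parseval that $\mathcal E(\cdot,z)\in L^2(\alpha\mathbb N_0,\pi_{\alpha,\sigma})$, since $\|\mathcal E(\cdot,z)\|^2_{L^2} = \sum_{n=0}^\infty |z|^{2n}/(n!\,\sigma^n) = \exp(|z|^2/\sigma) < \infty$. Then I would introduce the partial sums $\mathcal E_N(\cdot,z) := \sum_{n=0}^N \tfrac{z^n}{n!\,\sigma^n}\,c_n$, which are polynomials and hence belong to $\operatorname{dom}(\partial^-)$. Applying $\partial^-$ term by term and reindexing $m = n-1$ yields $\partial^-\mathcal E_N(\cdot,z) = \lambda\,\mathcal E_{N-1}(\cdot,z)$, where $\lambda$ is a scalar multiple of $z$ determined by $\|c_n\|^2 = n!\,\sigma^n$, in analogy with the Gaussian case of Section~\ref{xdzewa4wa46}. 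Letting $N\to\infty$, both $\mathcal E_N(\cdot,z) \to \mathcal E(\cdot,z)$ and $\partial^-\mathcal E_N(\cdot,z) \to \lambda\,\mathcal E(\cdot,z)$ converge in $L^2$, so by definition of the closure $\mathcal E(\cdot,z)\in\operatorname{dom}(\overline{\partial^-})$ and the eigenvalue identity follows from the limit.

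The hard part is not truly hard: it is the bookkeeping of the $\sigma$-factors arising from $\|c_n\|^2 = n!\,\sigma^n$, which determines both the constant in the formal-adjoint identity and the precise eigenvalue. All other inputs—density of polynomials, Parseval for the coherent-state series, and the definition of the closure—are standard, so no substantive obstacle is anticipated.
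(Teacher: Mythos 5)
Your overall strategy is the natural one, and the paper itself gives no argument for this proposition (it only remarks that it "can be easily shown"), so there is no competing proof to diverge from: closability via the densely defined formal adjoint and the eigen-relation via partial sums plus the definition of the closure are exactly the expected ingredients. Your adjoint identity is correct, since $(\partial^- c_n,c_m)_{L^2}=n!\,\sigma^{n-1}\delta_{n,m+1}=(c_n,\sigma^{-1}\partial^+c_m)_{L^2}$, so $(\partial^-)^*$ contains $\sigma^{-1}\partial^+$ on the dense subspace $\mathbb C[z]$ and $\partial^-$ is closable; likewise $\|\mathcal E(\cdot,z)\|^2_{L^2}=\exp(|z|^2/\sigma)$ is right, and the truncation-and-limit step is a valid way to put $\mathcal E(\cdot,z)$ into the domain of the closure.

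The genuine gap is precisely the item you postponed: the eigenvalue. Carrying out your own reindexing gives $\partial^-\mathcal E_N(\cdot,z)=\frac{z}{\sigma}\,\mathcal E_{N-1}(\cdot,z)$, because $\partial^-c_n=n\,c_{n-1}$ and the coefficient of $c_n$ in \eqref{cfdyrdstrswt6} is $z^n/(n!\,\sigma^n)$; note that the norms $\|c_n\|^2_{L^2}=n!\,\sigma^n$ play no role in this computation (they enter only the $L^2$-convergence and the adjoint identity), so the eigenvalue is not "determined by" them as you suggest. Passing to the limit yields $\partial^-\mathcal E(\cdot,z)=(z/\sigma)\,\mathcal E(\cdot,z)$: the eigenvalue of $\partial^-$ is $z/\sigma$, not $z$. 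Thus the eigenvalue $z$ belongs to the operator $\sigma\partial^-$, exactly parallel to the Gaussian case of Section~\ref{xdzewa4wa46}, where the coherent state $\mathbb E(\cdot,z)$ is an eigenvector of $\sigma a^-$ with eigenvalue $z$; the proposition as printed appears to have dropped this factor of $\sigma$. Leaving $\lambda$ as "a scalar multiple of $z$" is not acceptable here, since the precise eigenvalue is the substance of the claim: you must either compute $\lambda=z/\sigma$ and state the conclusion for $\sigma\partial^-$, or flag the missing $\sigma$ in the statement. With that single point fixed, the rest of your argument is complete and correct.
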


\section{The images of the operators $\mathcal U$ and $\mathcal V$ under $\mathcal S^{-1}$}\label{fxdzdzsese}

Recall that a function $f : \alpha {\mathbb N}_0 \rightarrow \mathbb C$ belongs to $L^2 (\alpha\mathbb N_0 , \pi_{\alpha, \sigma})$ if and only if it can be uniquely represented in the form $f(\alpha k) = \sum_{n=0}^\infty f_n c_n (\alpha k)$, where $k\in\mathbb N$ and  $f_n \in \mathbb C$ ($n \in {\mathbb N}_0$) satisfy $ \lVert f \rVert^2_{L^2 (\alpha\mathbb N_0 , \pi_{\alpha, \sigma})} = \sum_{n=0}^\infty| f_n |^2  n! \, \sigma^n < \infty$. 

By \eqref{tsrear}, the Sheffer sequence $(c_n )_{n=0}^\infty$ satisfies the condition of Theorem \ref{relatedSheffer}. Hence, the statement of Corollary \ref{CorrfllwThm} holds for $s_n(z)=c_n(z)$  ($n\in\mathbb N_0$). 

 For each $l \in \mathbb N$, there exists $ C > 0$ such that, for each $f(z) = \sum_{n=0}^\infty f_n c_n (z)\in {\mathcal E}_{\min}^1(\mathbb C)$, we have $ \lVert f\rVert_{L^2 (\alpha\mathbb N_0 ,\pi_{\alpha,  \sigma})} \leq C  \vertiii{f}_{l}$.
Hence, for each $f\in {\mathcal E}_{\min}^1 (\mathbb C)$, the restriction of $f$ to $\alpha\mathbb N_0$ determines a function from $L^2 (\alpha\mathbb N_0,\pi_{\alpha,  \sigma})$, and furthermore, if the restriction of $f$ to $\alpha\mathbb N_0$ is identically equal to zero, then the function $f$ is identically equal to zero on $\mathbb C$.
Thus, we obtain an embedding of ${\mathcal E}_{\min}^1 (\mathbb C)$ into ${L^2 (\alpha\mathbb N_0 , \pi_{\alpha,  \sigma})}$ and this embedding is continuous. 

Theorem \ref{relatedSheffer}  implies

\begin{proposition} \label{poiy6yh}
The Fr\'echet space ${\mathcal E}_{\min}^1 (\mathbb C)$ is continuously embedded into ${L^2 (\alpha\mathbb N_0, \pi_{\alpha, \sigma})}$ (in the above explained sense). Furthermore, the operator $\mathcal S$ restricted to ${\mathcal E}_{\min}^1 (\mathbb C)$ is a self-homeomorphism of $ {\mathcal E}_{\min}^1 (\mathbb C)$.
\end{proposition}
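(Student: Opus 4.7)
The first assertion has essentially been established in the discussion that immediately precedes the proposition, so I would organize the proof around a brief summary of that material followed by a direct application of Grabiner's theorem. For the continuous embedding, I would invoke Corollary~\ref{CorrfllwThm} applied with $s_n=c_n$: every $f\in{\mathcal E}_{\min}^1(\mathbb C)$ has a unique expansion $f(z)=\sum_{n=0}^\infty f_n c_n(z)$ with $\vertiii{f}_l<\infty$ for each $l\in\mathbb N$. Since $n!\,\sigma^n\le (n!)^2 2^{nl}$ for every $n\in\mathbb N_0$ once $2^l\ge\sigma$, one has $\|f\|_{L^2(\alpha\mathbb N_0,\pi_{\alpha,\sigma})}\le\vertiii{f}_l$ for such $l$. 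Injectivity of the restriction to $\alpha\mathbb N_0$ follows because $f|_{\alpha\mathbb N_0}\equiv 0$ forces, by orthogonality of $(c_n)$ in $L^2(\alpha\mathbb N_0,\pi_{\alpha,\sigma})$, every coefficient $f_n$ to vanish, hence $f\equiv 0$ on $\mathbb C$.

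For the self-homeomorphism assertion, the plan is to apply Theorem~\ref{relatedSheffer} directly to the Sheffer sequence $(c_n)_{n=0}^\infty$. Its generating function~\eqref{tsrear} is of the form $\exp(zB(t))A(t)$ with $B(t)=\alpha^{-1}\log(1+\alpha t)$ and $A(t)=\exp(-\sigma t/\alpha)$, and both of these formal power series are holomorphic in a neighborhood of the origin (in fact, for $|t|<1/\alpha$). Therefore Theorem~\ref{relatedSheffer} applies. The Sheffer operator $\mathbb C[z]\ni z^n\mapsto c_n(z)$ it produces is precisely the restriction of $\mathcal S^{-1}$ to $\mathbb C[z]$, because $\mathcal S c_n=z^n$ by definition. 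Hence Grabiner's theorem gives that $\mathcal S^{-1}$ extends by continuity to a linear self-homeomorphism of ${\mathcal E}_{\min}^1(\mathbb C)$.

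It remains to pass from $\mathcal S^{-1}$ to $\mathcal S$. The inverse of a linear self-homeomorphism is itself a linear self-homeomorphism, so one obtains a continuous linear map $R\colon{\mathcal E}_{\min}^1(\mathbb C)\to{\mathcal E}_{\min}^1(\mathbb C)$ inverting the extended $\mathcal S^{-1}$. Since $\mathbb C[z]$ is dense in ${\mathcal E}_{\min}^1(\mathbb C)$ (by Theorem~\ref{relatedSheffer}) and $R$ coincides with $\mathcal S$ on $\mathbb C[z]$, the map $R$ is the unique continuous extension of $\mathcal S$, which is therefore a self-homeomorphism of ${\mathcal E}_{\min}^1(\mathbb C)$, as required. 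There is no substantive obstacle in this argument: once the generating function of $(c_n)$ is seen to meet Grabiner's holomorphicity hypothesis, the conclusion is immediate, and the passage to $\mathcal S$ is a routine density-plus-inverse argument.
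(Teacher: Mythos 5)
Your proposal is correct and follows essentially the same route as the paper: the continuous embedding comes from Corollary~\ref{CorrfllwThm} applied with $s_n=c_n$ together with the norm comparison and the orthogonality of $(c_n)$ in $L^2(\alpha\mathbb N_0,\pi_{\alpha,\sigma})$, and the self-homeomorphism property comes from Theorem~\ref{relatedSheffer} applied to the Sheffer sequence $(c_n)$ (whose generating function \eqref{tsrear} has $A$, $B$ holomorphic near $0$), i.e., to $\mathcal S^{-1}$ on polynomials, followed by inversion. Your explicit identification of $A(t)=e^{-\sigma t/\alpha}$, $B(t)=\alpha^{-1}\log(1+\alpha t)$ and the density argument passing from $\mathcal S^{-1}$ to $\mathcal S$ only spell out details the paper leaves implicit.
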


Recall the linear operators $\mathcal U$ and $\mathcal V$ acting in $\mathbb C[z]$, see \eqref{cxdsarw4ry}. Define linear operators $U,V\in\mathcal L(\mathbb C[z])$ by $U=\mathcal S^{-1}\mathcal U\mathcal S$ and $V=\mathcal S^{-1}\mathcal V\mathcal S$. Thus,  
$$U = \partial^+ + \frac{\sigma}{\alpha},\quad V = \alpha  \partial^- + 1 .$$
Since $\rho = \mathcal U\mathcal V$, formula \eqref{cxsrsareare5} implies  
\begin{equation}\label{xzaewewaEW}
Z=UV.
\end{equation}

It is easy to see that the operators $U$, $V$, $Z$ and $E_h$ for $h\in\mathbb C$ admit
  (unique) extensions to continuous linear operators in $ {\mathcal E}_{\min}^1 (\mathbb C)$, for which we preserve their original notation. In particular, formula \eqref{xzaewewaEW} can be understood as an equality of continuous linear operators in $ {\mathcal E}_{\min}^1 (\mathbb C)$.

\begin{proposition}
We have the following two equalities of continuous linear operators in~$ {\mathcal E}_{\min}^1 (\mathbb C)$: 
$$U = Z  E_{-\alpha},\quad V = E_{\alpha}.$$ 
\end{proposition}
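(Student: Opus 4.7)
The plan is to establish $V = E_\alpha$ first on polynomials through the Sheffer basis $(c_n)_{n=0}^\infty$, then extend by continuity to $\mathcal{E}^1_{\min}(\mathbb{C})$, and finally to read off $U = ZE_{-\alpha}$ essentially for free from the already-established identity $Z = UV$ in \eqref{xzaewewaEW}.

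First, I would verify that $Vc_n = E_\alpha c_n$ for every $n\in\mathbb{N}_0$. The left-hand side is immediate: since $V = \mathcal{S}^{-1}\mathcal{V}\mathcal{S}$ and $\mathcal{V}z^n = \alpha n z^{n-1} + z^n$, applying $\mathcal{S}^{-1}$ gives $Vc_n = c_n + \alpha n c_{n-1}$. The right-hand side is extracted from the generating function \eqref{tsrear}: multiplying it by $(1+t\alpha) = \exp(\log(1+t\alpha))$ converts $\exp\bigl(\tfrac{z}{\alpha}\log(1+t\alpha) - \sigma t/\alpha\bigr)$ into $\exp\bigl(\tfrac{z+\alpha}{\alpha}\log(1+t\alpha) - \sigma t/\alpha\bigr)$, so that
$$\sum_{n=0}^{\infty}\frac{t^n}{n!}\,c_n(z+\alpha) = (1+t\alpha)\sum_{n=0}^{\infty}\frac{t^n}{n!}\,c_n(z).$$
Comparing coefficients of $t^n$ gives $c_n(z+\alpha) = c_n(z) + \alpha n c_{n-1}(z)$, which exactly matches $Vc_n$.

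Second, both $V$ and $E_\alpha$ are continuous linear operators on $\mathcal{E}^1_{\min}(\mathbb{C})$ (the continuity of $E_\alpha$ is an immediate consequence of the defining estimate \eqref{tstrsara5raq35}, while the continuity of $V$ is the statement recorded just before the proposition). By Theorem~\ref{relatedSheffer}, every $f\in\mathcal{E}^1_{\min}(\mathbb{C})$ admits the representation $f=\sum_{n=0}^{\infty}f_n c_n$ with the series converging in $\mathcal{E}^1_{\min}(\mathbb{C})$, so the partial sums, which are polynomials, form a dense subspace. Combined with the previous step, this upgrades the basis identity $Vc_n = E_\alpha c_n$ to the operator equality $V = E_\alpha$ on $\mathcal{E}^1_{\min}(\mathbb{C})$.

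Finally, \eqref{xzaewewaEW} states $Z = UV$ as continuous operators on $\mathcal{E}^1_{\min}(\mathbb{C})$, and $E_\alpha$ is a self-homeomorphism of that space with continuous inverse $E_{-\alpha}$; composing on the right with $E_{-\alpha}$ yields $U = ZE_{-\alpha}$. The only step requiring any genuine computation is the generating-function manipulation in Step 1, and I do not expect a real obstacle anywhere: the density/continuity argument is standard, and the passage from $Z = UV$ to $U = ZE_{-\alpha}$ is purely algebraic once the invertibility of $V$ is known.
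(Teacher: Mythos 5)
Your proof is correct, and it reaches the two identities along a slightly different path than the paper for the key step $V=E_\alpha$. The paper argues at the operator level: by Theorem~\ref{thmSS} and the generating function \eqref{tsrear}, the lowering operator of $(c_n)_{n=0}^\infty$ is $\partial^-=C(D)$ with $C(t)=\frac1\alpha(e^{\alpha t}-1)$ the compositional inverse of $B(t)=\frac1\alpha\log(1+t\alpha)$, and Boole's formula $e^{\alpha D}=E_\alpha$ then gives $\partial^-=\frac1\alpha(E_\alpha-1)$, i.e.\ $V=\alpha\partial^-+1=E_\alpha$, after which $U=ZE_{-\alpha}$ follows from $Z=UV$ exactly as in your last step. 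You instead bypass the umbral-calculus theorems and Boole's formula and verify the same identity directly on the basis: your generating-function manipulation yields $c_n(z+\alpha)=c_n(z)+\alpha n\,c_{n-1}(z)$, which is precisely $\partial^-=\frac1\alpha(E_\alpha-1)$ evaluated on $c_n$, and you then pass from the basis to all of ${\mathcal E}_{\min}^1(\mathbb C)$ by an explicit density-plus-continuity argument (partial sums of the expansion from Theorem~\ref{relatedSheffer} are polynomials, and both $V$ and $E_\alpha$ are continuous, the latter by the elementary estimate $\|E_\alpha f\|_n\le e^{\alpha/n}\|f\|_n$ coming from \eqref{tstrsara5raq35}). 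What the paper's route buys is brevity and a conceptual placement of the result inside the Sheffer/umbral framework already set up in Section~2, with the extension to ${\mathcal E}_{\min}^1(\mathbb C)$ handled implicitly by the remark preceding the proposition; what your route buys is a self-contained, elementary computation and a fully explicit justification of the passage from polynomials to entire functions, which the paper leaves to the reader. The final deduction $U=ZE_{-\alpha}$ from \eqref{xzaewewaEW} and the invertibility of $E_\alpha$ is identical in both arguments.
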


\begin{proof}By Theorems~\ref{thmBT}, \ref{thmSS} and formula \eqref{tsrear}, 
\begin{equation}\label{cfsts}
 \partial^- = C(D),
 \end{equation}
where $C(t)$ is the compositional inverse of $B(t) = \frac{1}{\alpha} \, \log (1+t\alpha)$. Hence, 
\begin{equation}\label{vcfsersreswr5w3e} C(t) = \frac{1}{\alpha} (e^{\alpha t} -1). 
\end{equation}
By  \eqref{cfsts}, \eqref{vcfsersreswr5w3e},  and Boole's formula,
$$ \partial^- = \frac{1}{\alpha} (e^{\alpha D} -1) = \frac{1}{\alpha} (E^\alpha - 1) .$$
Hence, 
$$V = \alpha  \partial^- + 1 = E^\alpha. $$ 
By \eqref{xzaewewaEW}, $Z=UE^{\alpha}$, which implies  $ U = Z  E^{-\alpha}$.
\end{proof}

		\begin{center}
	{\bf Acknowledgements} 
		\end{center}

\noindent The initial version of the paper was written when C.K. was a postdoc at Swansea University. The revision of the paper was conducted after C.K. had  become a lecturer at Suranaree University of Technology. 

C.K. was financially supported by the EPSRC, UKRI as EPSRC Maths DTP Doctoral Prize related to grant reference EP/T517987/1. 

The authors are grateful to the anonymous refree   for their helpful suggestion to improve the paper.

\end{document}